\title{Near-Tight Algorithms for\\
  the Chamberlin-Courant and Thiele Voting Rules\thanks{A conference version of this work appears in IJCAI 2022~\cite{SornatVWX22}.}}
\date{}
\author[1]{Krzysztof Sornat\thanks{sornat@agh.edu.pl. Part of this work was done while Krzysztof was a postdoc at MIT CSAIL, USA.}}
\author[2]{Virginia {Vassilevska Williams}\thanks{virgi@mit.edu}}
\author[2]{Yinzhan Xu\thanks{xyzhan@mit.edu}}
\affil[1]{AGH University, Poland}
\affil[2]{MIT CSAIL, USA}
\newtheorem{theorem}{Theorem}
\newtheorem{claim}[theorem]{Claim}
\newtheorem{definition}[theorem]{Definition}
\newtheorem{lemma}[theorem]{Lemma}
\newtheorem{corollary}[theorem]{Corollary}
\newtheorem{observation}[theorem]{Observation}
\newcommand{\ProblemFormat}[1]{{\sc #1}}
\newcommand{\ProblemName}[1]{\ProblemFormat{#1}\xspace}
\newcommand{\chamberlincourant}[0]{\ProblemName{Chamberlin-Courant}}
\newcommand{\thiele}[0]{\ProblemName{Thiele}}
\newcommand{\generalizedthiele}[0]{\ProblemName{Generalized Thiele}}
\newcommand{\kmedian}[0]{\ProblemName{$k$-Median}}
\newcommand{\maxkcoverage}[0]{\ProblemName{Max $k$-Coverage}}
\newcommand{\cnfsat}[0]{\ProblemName{CNF-SAT}}
\newcommand{\tsat}[0]{\ProblemName{$3$-SAT}}
\newcommand{\independentset}[0]{\ProblemName{Independent Set}}
\newcommand{\hittingset}[0]{\ProblemName{Hitting Set}}
\newcommand{\Oh}{\ensuremath{\mathcal{O}}}
\newcommand{\Ohstar}{\ensuremath{\mathcal{O}^*}}
\newcommand{\tO}{\ensuremath{\tilde{\Oh}}}
\newcommand{\calF}{\ensuremath{\mathcal{F}}}
\newcommand{\ComplexityFont}[1]{{\ensuremath{\mathsf{#1}}}}
\newcommand{\classW}{\ComplexityFont{W}}
\newcommand{\classNP}{\ComplexityFont{NP}}
\newcommand{\naturals}{{{\mathbb{N}}}}
\newcommand{\reals}{{{\mathbb{R}}}}
\DeclareMathOperator*{\sort}{Int\-Sort}
\newcommand{\eps}{\varepsilon}
\DeclareMathOperator*{\poly}{poly}
\begin{document}

\maketitle

\begin{abstract}
We present an almost optimal algorithm for the classic Chamberlin-Courant multiwinner voting rule (CC) on single-peaked preference profiles.
Given $n$ voters and $m$ candidates, it runs in almost linear time in the input size, improving the previous best $\Oh(nm^2)$ time algorithm of Betzler et al.~(2013).
We also study multiwinner voting rules on nearly single-peaked preference profiles in terms of the candidate-deletion operation.
We show a polynomial-time algorithm for CC where a given candidate-deletion set $D$ has logarithmic size.
Actually, our algorithm runs in $2^{|D|} \cdot \poly(n,m)$ time and the base of the power cannot be improved under the Strong Exponential Time Hypothesis.
We also adapt these results to all non-constant Thiele rules which generalize CC with approval ballots.
\end{abstract}

\section{Introduction}
We study computational aspects of the Chamberlin-Courant voting rule (CC)~\cite{ChamberlinCourant83} with general misrepresentation function and the Thiele rules~\cite{Thiele95} in which we choose a committee of size $k$ such that the total utility of the voters is maximized.
In the case of CC, the utility of a voter from a committee is determined by the most preferred committee member.
In the case of a $w$-Thiele rule, parameterized by a non-increasing sequence $w=(w_1,w_2,\dots)$, the utility of a voter is equal to $w_1 + w_2 + \dots + w_x$, where $x$ is the number of committee members approved by the voter.

CC and Thiele rules are basic multiwinner voting rules studied in social choice theory and, in particular, in the computational social choice community~\cite{FaliszewskiSST17trends,BrillLS17}.
They were discussed broadly for their applications, not only in parliamentary elections but also in many scenarios when a group of agents has to pick some number of items for joint use~\cite{SkowronFL16}.

Unfortunately, computing an optimal committee under CC is already \classNP-hard for the case of {\it approval ballots}, i.e., when each voter gives a subset of approved candidates~\cite{ProcacciaRZ08}.
This special case is called {\it Approval Chamberlin-Courant} (Approval-CC) and it was considered by Procaccia et al.~\cite{ProcacciaRZ08}
as a minimization problem.
Its maximization version is equivalent to the well-known \maxkcoverage problem for which many hardness results have been shown \cite{SkowronF17}.
In particular, \maxkcoverage is \classNP-hard\footnote{\classNP-hardness of \maxkcoverage directly implies \classNP-hardness of Approval-CC.
This consequence was mentioned (without a proof) in the conference paper of Procaccia et al.~\cite{ProcacciaRZ07}.}
to approximate within $(1-1/e+\eps)$ factor for any $\eps>0$~\cite{Feige98}.
Moreover, this hardness of approximation holds even in FPT-time with respect to solution size $k$ assuming a gap version of the Exponential Time Hypothesis~\cite{Manurangsi20}.

Notice that Approval-CC is equivalent to the $(1,0,\dots)$-Thiele rule, so all these hardness results for Approval-CC hold for this basic Thiele rule as well.
One may consider other $w$-Thiele rules with specific $w$, e.g., $w$-Thiele rule with $w_i = 1/i$ which is equivalent to the well-known {\it Proportional Approval Voting}~\cite{Kilgour2010} and $w_i = 1/i^2$ which is related to {\it Penrose apportionment method}~\cite{BrillLS17}.
Notice that $(1,1,\dots)$-Thiele is equivalent to {\it Multiwinner Approval Voting}~\cite{AzizGGMMW15}, which is solvable in polynomial-time.
All the other ones\footnote{
We assume $w_1=1$ without loss of generality.
} are \classNP-hard~\cite[Theorem 5]{SkowronFL16}.

Analogously to \maxkcoverage, hardness of approximation results hold for the natural class of $w$ sequences such that partial sums of $w$ grows in a sublinear way,
where a hardness constant depends strictly on $w$~\cite{DudyczMMS20,BarmanFF21,BarmanFGG22}.
It follows that for such Thiele rules one should not expect a PTAS nor even an FPT-approximation scheme.

Because of the hardness of CC and Thiele rules, both rules were studied in restricted domains, i.e., on preference profiles with certain structures, e.g., single-peaked (SP) and single-crossing (SC).
For a survey on structured preferences, see~\cite{ElkindLP17trends}.
Such restricted domains are motivated by real-world elections.
For instance, SP preference profiles appear when, intuitively, the candidates can be placed on an one-dimensional axis, and the farther away a candidate is from a voter's favorite candidate the more the voter dislikes the candidate.
Examples of such SP-axis are different policies in an ideological spectrum: the Left vs the Right, more liberal vs more conservative etc.

Fortunately, CC and Thiele rules on SP domain (CC-SP and Thiele-SP respectively) are known to be polynomial-time solvable~\cite{BetzlerSU13,Peters18}.
However, the dynamic programming algorithm for CC-SP proposed by
Betzler et al.~\cite{BetzlerSU13} is not optimal in terms of running time.
We fill this gap by showing an almost optimal algorithm and we study polynomial-time solvability for nearly single-peaked preference profiles for CC and Thiele rules.

\subsection{Our Contribution}
The previous best running time for CC-SP was $\Oh(nm^2)$~\cite{BetzlerSU13}.
We improve it by presenting an almost linear time algorithm (up to subpolynomial factors) in Theorem~\ref{thm:cc-sp}.
To achieve this result we reduce our problem to the Minimum Weight $t$-Link Path problem and prove that our weight function has the {\it concave Monge property} (see Lemma~\ref{lem:Concave_Monge_proof}).
First, we construct an edge-weight oracle in $\Oh(nm\log(n))$ pre-processing time that gives a response in $\Oh(\log m)$ time.
Then, for $k = \Omega(\log(m))$, using the algorithm of Schieber~\cite{Schieber98} we find a solution in $m \cdot 2^{\Oh(\sqrt{\log(k)\log\log(m)})} = m^{1+o(1)}$ time and oracle accesses.
For the case $k = \Oh(\log(m))$ we use the $\Oh(mk)$ time algorithm of Aggarwal and Park~\cite{AggarwalP88} to achieve $\Oh(m\log(m))$ time and oracle accesses.
Overall, the running time is $\Oh(nm\log(n)+m^{1+o(1)} \log (m)+m\log^2(m))$, which is almost linear in the input size.

We believe our algorithm will be efficient in practice as the $nm\log(n)$ part comes from sorting (hence, it is very efficient in practice) and the $m^{1+o(1)}$ part is a lower order term when $n = \Omega(m)$ (which is typical in many voting scenarios).
All other components have running time $\Oh(nm)$ and are efficient.
Moreover, one can also replace the $m^{1+o(1)}$ part with any other (asymptotically slower) implementation for Minimum Weight $t$-Link Path (such as the ones in~\cite{AggarwalST94}) to potentially obtain better practical performance for specific applications.

We then investigate the computational complexity of CC when the preference profile is not SP but close to SP in terms of the candidate-deletion operation.
We assume a subset of $d$ candidates is given, whose removal makes an instance SP.
We call it the {\it candidate-deletion set}.
We use this general problem formulation, as for some popular utility functions a smallest candidate-deletion set is easy to compute, but for some it is $\classNP$-hard.
One of the standard motivations for considering nearly structured preferences is preference elicitation with small errors possible~\cite[Chapter 10.5]{ElkindLP17trends}.
Here, we present one more scenario when nearly SP preferences may appear:
when a few new politicians come to the political scene and the voters are not sure how to place them on the left-right political spectrum.
Clearly, the new politicians form a candidate-deletion set (assuming that the previously known politicians are placed on an SP-axis due to their longer public activity).

It is known that given the candidate-deletion set, CC can be solved in $\Ohstar(2^d)$ time \cite{MisraSV17}, and thus is FPT with respect to $d$.
Importantly, we show that the base of the power is optimal assuming the Strong Exponential Time Hypothesis.
Notice that for $d = \Oh(\log(nm))$ our algorithm runs in polynomial-time.
On the other hand, we show in Theorem~\ref{thm:cc-nsp-hardness} that if $d$ is slightly larger (e.g., $\omega(\log n)$ or $\omega(\log m)$), then polynomial-time solvability of CC contradicts the Exponential Time Hypothesis (ETH).
Overall, we derive polynomial-time algorithm for CC for almost all values of $d$ for which such a polynomial-time algorithm may exist under ETH.

We adapt the above algorithm to Thiele rules by extending the Integer Linear Programming approach of Peters~\cite{Peters18} to generalized Thiele rules (allowing different weight sequences for each voter) on SP preference profiles.
This allows us to pre-elect some winning candidates by guessing the winners from the candidate-deletion set.
As a result we obtain an $\Ohstar(2^d)$-time algorithm.
Unlike the case for CC, we were not able to prove that the base of the power cannot be improved.
However, we show that there is no $\Ohstar(2^{o(d)})$-time algorithm under ETH for each non-constant $w$-Thiele rule.
Using this, we show that if $d$ is allowed to be slightly larger than $\Oh(\log(nm))$ then polynomial-time solvability of any non-constant $w$-Thiele rule contradicts ETH.

\subsection{Related Work}\label{subsec:related-work}

First, we present two papers that are the most relevant from a technical point of view.

\paragraph{The paper of Constantinescu and Elkind~\cite{ConstantinescuE21}.}
Recently, Constantinescu and Elkind~\cite{ConstantinescuE21} also used fast algorithms for Minimum Weight $t$-Link Path to solve CC, but on SC preference profiles.
In their reduction, they construct an instance of Minimum Weight $t$-Link Path on an $\Oh(n)$-node graph with the concave Monge property while the graph we construct has $\Oh(m)$ nodes.
The best algorithm for solving Minimum Weight $t$-Link Path on an $\Oh(n)$-node graph with the concave Monge property runs in $n^{1+o(1)}$ time~\cite{Schieber98}, but Constantinescu and Elkind used an $\Oh(m)$ time algorithm to query the weight of each edge, so they end up getting an $mn^{1+o(1)}$ running time.
In our algorithm, we first pre-process the preference profile in $\Oh(nm \log(n))$ time, and then an algorithm is able to query the weight of each edge in $\Oh(\log(m))$ time, so we get an $\Oh(nm \log(n) + m^{1+o(1)})$ overall running time.
Even though both algorithms have near-linear running times, our algorithm is arguably faster since the $n^{o(1)}$ factor of their algorithm (and the $m^{o(1)}$ factor of our algorithm) could be super poly-logarithmic.
Therefore, in the natural case where a preference profile is both SC and SP~\cite{ElkindFS20}, it is potentially more efficient to use our algorithm.
The difference enlarges if an SP-axis and an SC-axis are not given.
Indeed, for ballots given as linear orders\footnote{
It is the case for, e.g., utility functions defined as {\it committee scoring rules}~\cite{ElkindFSS17} such as {\it Borda scores}.
}
the problem of finding an SC-axis is known to be computable in $\Oh(nm^2)$ time~\cite{BredereckCW13} while finding an SP-axis can be done in $\Oh(nm)$ time (even if only one vote is a linear order)~\cite{FitzsimmonsL20}.

\paragraph{The paper of Misra, Sonar, and Vaidyanathan~\cite{MisraSV17}.}
Misra et al.~\cite{MisraSV17} presented an $\Ohstar(2^d)$ time algorithm for CC-SP, whose main idea is to try all subsets of a given candidate-deletion set as winners and compute remaining winners.
This construction is well-suited for generalization to Thiele rules, so we provide a proof similar to theirs for the CC-SP case for completeness.
We note that Misra et al.\ also considered other research directions in their paper, e.g., voter-deletion distance to SP and egalitarian version of CC.\\

Next, we present related works that are less technically related to our paper but give a broader picture on the topic.

\paragraph{Other nearly SP measures.} Different notions of nearly single-peaked preference profiles have been proposed~\cite{ElkindLP17trends,ErdelyiLP17}.
Let us mention just a few of them.
{\it Voter-deletion distance} is the minimum number of voters one has to delete from the instance to achieve the SP property.
{\it Swap distance} is the minimum number of swaps one has to make within the votes (in a model with ordinal ballots) to obtain the SP property.
Computing some of SP measures appears to be \classNP-hard,
in particular, it is the case for both notions mentioned above~\cite{ErdelyiLP17}.
Hence, some FPT and approximation algorithms have been proposed~\cite{ElkindL14}.

\paragraph{Generalizations of SP to other graphs.}
There are generalizations of SP to more complex graphs, e.g., a circle or a tree.
For example, all Thiele rules are polynomial-time solvable for SP preference profiles on a line, and also on a circle~\cite{PetersL20}.
Peters et al.~\cite{PetersYCE22} generalized the algorithm of Betzler et al.~\cite{BetzlerSU13} to SP on a tree, but their running time is $\poly(n,m^\lambda,k^\lambda)$, where $\lambda$ is the number of leaves.
Actually they showed that CC-SP is $\classNP$-hard on a tree, even for Borda scores, and they ask if the problem is FPT w.r.t.~$\lambda$ (as their algorithm is only XP w.r.t.~$\lambda$).

\paragraph{Clustering problems.}
CC is equivalent to a maximization version of classic clustering problem called \kmedian (in discrete but not necessarily metric space)~\cite{ByrkaSS18}.
Therefore, CC may be seen as a clustering method in which we partition voters into $k$ clusters, where cluster centers are defined by the winning committee members and the voters from a common cluster have the same most preferred committee member.
Procaccia et al.~\cite{ProcacciaRZ08} showed \classNP-hardness for the Monroe rule~\cite{Monroe95} which, in addition to CC, requires to have a balanced representation assignment.
The Monroe rule is related to the \ProblemName{Capacitated $k$-Median} clustering problem (see e.g.\ the paper of Cohen-Addad and Li~\cite{CohenL19}) with all cluster center capacities equal to $\lceil \frac{n}{k}\rceil$ but in the case of the Monroe rule there are also lower-bounds on the capacities used: $\lfloor \frac{n}{k}\rfloor$.

\section{Preliminaries}

First we introduce some notations and define the computational problems formally.
The $\tO$ notation suppresses factors subpolynomial in the input size, i.e., factors $(nm)^{o(1)}$.
The $\Ohstar$ notation suppresses factors polynomial in the input size, i.e., factors $(nm)^{\Oh(1)}$.

\paragraph{Elections, misrepresentation, approval ballots.}
An {\it election} is a pair $(V,C)$ consisting of a set $V$ of $n$ {\it voters} and a set $C$ of $m$ {\it candidates}.

A {\it misrepresentation function} $r: V \times C \rightarrow \reals_{\geq 0}$ measures how much a voter is misrepresented by a particular candidate ($\reals_{\geq 0}$ denotes a set of non-negative real numbers)\footnote{We consider the real-RAM model of computation in this paper.
Our algorithms can run perfectly on a typical word-RAM machine if the misrepresentations are given as integers. }.
Its dual measure is the {\it utility function} $u: V \times C \rightarrow \reals_{\geq 0}$ that indicates how well a voter is represented by a particular candidate.
A tuple $(V,C,r)$ or $(V,C,u)$ is called a {\it preference profile} or just a {\it profile} or {\it preferences}.

Balloting a misrepresentation function fully is costly for a voter (needs to provide $m$ exact numbers).
A simpler and one of the most popular type of a misrepresentation function is based on {\it approval ballots}.
In {\it approval voting} a voter $v$ gives an approval ballot as a vote, i.e., a subset $A_v$ of candidates that $v$ approves.
In the case of approval voting we obtain an {\it approval utility function}, i.e., $u(v,c) = 1$ iff $c \in A_v$ and $u(v,c) = 0$ iff $c \notin A_v$.
Furthermore, we define an {\it approval misrepresentation function} as $r(v,c) = 1-u(v,c)$.

Another popular misrepresentation function is based on {\it Borda scores}.
We call $r$ a {\it Borda misrepresentation function} if for each voter $v$ we have that $r(v,\cdot)$ is a bijection from $C$ to $\{0,1,\dots,m-1\}$.
It represents a linear ordering of candidates and counts how many candidates beat a particular candidate within the vote.

For a voter $v$, if $r(v,c)$ has different values for distinct candidates, then a vote of $v$ can be seen as a {\it linear order} over candidates.
If there exists two distinct candidates $c,c'$ such that $r(v,c) = r(v,c')$ then we call a vote of $v$ a {\it weak order}.

\paragraph{Chamberlin-Courant voting rule.}
Originally, Chamberlin and Courant~\cite{ChamberlinCourant83} defined a voting rule on the Borda misrepresentation function (currently, it is often called Borda-CC).
In this paper, analogously to Betzler et al.~\cite{BetzlerSU13}, we study a more general computational problem.
In the \chamberlincourant problem (CC) we are given an election $(V,C)$, a misrepresentation function $r$, a misrepresentation bound $R \in \reals_{\geq 0}$ and a positive integer $k$.
Our task is to find a size-$k$ subset $W \subseteq C$ (called a {\it winning committee}) such that $\sum_{v \in V} \min_{c \in W} r(v,c) \leq R$ or return NO if such subset does not exist.
Additionally, for $C' \subseteq C$, we overload the notation by writing $r(v,C') = \min_{c \in C'} r(v,c)$ and we define {\it total misrepresentation of $C'$} by $r(C') = \sum_{v \in V} r(v,C')$.
In a minimization version of the CC problem we want to find a size-$k$ committee $W$ with the minimum total misrepresentation value $r(W)$.

\paragraph{$w$-Thiele voting rules.}
We define a computational problem $w$-\thiele parameterized by an infinite non-increasing non-negative sequence\footnote{If not stated otherwise, we consider only such sequences throughout this paper and call them {\it Thiele sequences}.} $w = (w_1,w_2,\dots)$, with $w_1 = 1$, as follows.
We are given an election $(V,C)$, approval ballots $A_v$ for each voter $v \in V$, an utility bound $U \in \reals_{\geq 0}$, and a positive integer $k$.
Our task is to find a size-$k$ subset $W \subseteq C$ such that $\sum_{v \in V} \sum_{i=1}^{|A_v \cap W|} w_i \geq U$ or return NO if no such subset exists.
Additionally, for $C' \subseteq C$, we overload the notation by writing $u(v,C') = \sum_{i=1}^{|A_v \cap C'|} w_i$ and we define {\it total utility of $C'$} by $u(C') = \sum_{v \in V} u(v,C')$.
In a maximization version of $w$-\thiele we want to find a size-$k$ committee $W$ with the maximum total utility value.
We notice that, a dual minimization version of $w$-\thiele is equivalent to the {\it OWA $k$-Median} problem with $0/1$ connection costs~\cite{ByrkaSS18}.

\paragraph{Single-peaked preference profiles.}
Following Proposition 3 in~\cite{BetzlerSU13} we say that a preference profile $(V,C,r)$ is {\it single-peaked} (SP)\footnote{
Sometimes called {\it possibly single-peaked}~\cite{ElkindLP17trends} as in this paper we allow ties (weak orders).
}
if there exists a linear order $\prec$ over $C$ (called the {\it single-peaked-axis} or {\it SP-axis}) such that
for every triple of distinct candidates $c_i,c_j,c_k \in C$ with $c_i \prec c_j \prec c_k$ or $c_k \prec c_j \prec c_i$ we have the following implication for each voter $v \in V$:
$r(v,c_i) < r(v,c_j) \implies r(v,c_j) \leq r(v,c_k)$.
Note that an SP profile with approval ballots can be seen as intervals of candidates (when ordering the candidates w.r.t.~an SP-axis).

\paragraph{Candidate-deletion set.}
For a given preference profile $(V,C,r)$, a subset of candidates $D \subseteq C$ is called a {\it candidate-deletion set} if its removal from the instance makes the profile SP, i.e.,
$(V,C \setminus D,r_{-D})$ is an SP profile, where its misrepresentation function $r_{-D}: V \times (C \setminus D) \rightarrow \reals_{\geq 0}$ is defined as $r$ restricted to $V \times (C \setminus D)$.
Typically we use $d = |D|$.

\paragraph{Approval-CC-SP.}
As we mentioned earlier, Approval-CC is equivalent to $(1,0,\dots)$-\thiele.
If an SP-axis is given then the input of Approval-CC-SP can be given as two numbers per voter: both endpoints of the approval interval.
Then the input size is $\Theta(n)$.
It means that the running time of our algorithm from Theorem~\ref{thm:cc-sp}, which is $\tO(nm) = \tO(n^2)$, can be much larger than the input size.
We would like to observe that Approval-CC-SP,
i.e.\ CC with a preference restriction called {\it Voter Interval}~\cite{ElkindLP17trends},
can also be solved in almost optimal time because it is actually equivalent to finding the maximum $k$-cliques of an {\it interval graph}.
\begin{observation}
  (Application III in~\cite{Schieber98})
  Approval-CC-SP can be solved in time $n^{1+o(1)}$.
\end{observation}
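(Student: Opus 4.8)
The plan is to recast Approval-CC-SP as an interval-stabbing problem and solve it as a Minimum Weight $t$-Link Path instance with the concave Monge property, exactly the shape handled by Schieber's algorithm.

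First I would fix the SP-axis and identify each voter $v$ with its approval interval $[l_v,r_v]$ of candidate positions. In the maximization version we seek a size-$k$ set $W$ of candidate positions maximizing the number of voters whose interval contains some position of $W$; equivalently, we minimize the number of \emph{uncovered} voters. A single candidate placed at a position $p$ covers exactly the intervals containing $p$, which form a clique of the interval graph (by Helly's property for intervals), so choosing $k$ candidates is the same as choosing $k$ cliques so as to cover the maximum number of vertices --- this is the ``maximum $k$-cliques'' formulation. Two positions lying in the same maximal region between consecutive interval endpoints cover identical voter sets, so it suffices to keep $\Oh(n)$ relevant positions $q_1 < \dots < q_M$ with $M = \Oh(n)$; if $k \ge M$ every coverable voter can be covered and the answer is immediate, so we may assume $k < M$.

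Next I would set up the path problem on nodes $q_0 < q_1 < \dots < q_M < q_{M+1}$, where $q_0,q_{M+1}$ are sentinels lying to the left and right of all intervals. A committee $q_0 < s_1 < \dots < s_k < q_{M+1}$ --- that is, a $(k+1)$-link path from $q_0$ to $q_{M+1}$ through the $k$ chosen nodes --- leaves a voter uncovered exactly when its interval falls strictly inside one gap $(s_j, s_{j+1})$. Hence, defining the link weight
\[
  g(a,b) \;=\; \bigl|\{\, v : a < l_v \text{ and } r_v < b \,\}\bigr|,
\]
the total path weight equals the number of uncovered voters, so a minimum-weight $(k+1)$-link path yields an optimal committee. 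I would then verify that $g$ is concave Monge: writing $g(a,b) = \sum_v \ind[l_v > a]\,\ind[r_v < b]$, each summand is a product of a factor non-increasing in $a$ and a factor non-decreasing in $b$. For $a \le a'$ and $b \le b'$, setting $X = \ind[l_v>a] \ge X' = \ind[l_v>a']$ and $Y = \ind[r_v<b] \le Y' = \ind[r_v<b']$, the per-voter quadrangle difference is $(X-X')(Y-Y') \le 0$, so summing gives $g(a,b)+g(a',b') \le g(a,b')+g(a',b)$, the concave Monge inequality.

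Finally, I would supply a fast weight oracle: $g(a,b)$ is a two-dimensional dominance count over the points $(l_v,r_v)$, answerable in $\Oh(\log n)$ time after $\Oh(n\log n)$ preprocessing (e.g.\ a merge-sort/range tree over compressed coordinates). Feeding this oracle into Schieber's Minimum Weight $t$-Link Path algorithm on the $\Oh(n)$-node graph costs $n^{1+o(1)}$ time and oracle accesses, giving the claimed $n^{1+o(1)}$ bound overall. I expect the only real subtlety to be the boundary bookkeeping: the coordinate compression to $\Oh(n)$ positions (so that the node count is $n$ rather than $m$, which is precisely what makes the final bound $n^{1+o(1)}$), together with the sentinels and the degenerate $k \ge M$ case. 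The Monge verification and the oracle are routine, and indeed the whole reduction is essentially Application III of Schieber~\cite{Schieber98}.
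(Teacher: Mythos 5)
Your proposal is correct and takes essentially the same route as the paper: the paper establishes this observation purely by citing Schieber's Application III (maximum $k$-cliques of an interval graph, solved as a concave-Monge minimum-weight $t$-link path), and your write-up---coordinate compression to $\Oh(n)$ positions with sentinels, the uncovered-voter link weight $g(a,b)$, the per-voter quadrangle check of the Monge inequality, and the $\Oh(\log n)$ dominance-counting oracle---is a faithful and correct expansion of exactly that reduction. The only detail worth keeping explicit is the one you already flag: representative candidates must be chosen from nonempty regions between interval endpoints and the degenerate case $k \ge M$ handled separately, both of which you do.
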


\paragraph{Parameterized complexity, ETH and SETH.}
We assume basic knowledge of parameterized complexity as this is common in computational social
choice research~\cite{BredereckCFGNW14,DornS17trends}.
For a textbook on the topic see, e.g.,~\cite{CyganFKLMPPS15}.
To provide lower bounds we will use two popular conjectures on solving the satisfiability of propositional formulas in conjunctive normal form (\cnfsat):
Exponential Time Hypothesis (ETH) and its stronger version---Strong Exponential Time Hypothesis (SETH).
For formal statements see, e.g., Conjectures 14.1 and 14.2 in~\cite{CyganFKLMPPS15}, but here we state simple consequences of both conjectures as follows.
1) ETH implies that there is no $2^{o(N)}$-time algorithm for \tsat with $N$ variables and $\Oh(N)$ clauses,
2) SETH implies that there is no $(2-\eps)^{N}$-time algorithm for \cnfsat, with $N$ variables and $\Oh(N)$ clauses.

\newpage
\section{Single-Peaked Preferences and Chamberlin-Courant Voting Rule}\label{sec:cc-sp}

In this section we present our algorithm for CC-SP that is almost optimal under the assumption that an SP-axis is given in the input or at least one vote is a linear order.
In the latter case we can find an SP-axis in $\Oh(nm)$ time, otherwise, if all votes are weak orders we do this in $\Oh(nm^2)$ time~\cite{FitzsimmonsL20}.
This is a bottle-neck for our algorithm.
In such a case, we obtain the same $\Oh(nm^2)$ running time as
the original dynamic programming approach of
Betzler et al.~\cite{BetzlerSU13}.

We first label the candidates so that $c_1 \prec c_2 \prec \cdots \prec c_m$ with respect to the SP-axis.
For simplicity of our analysis, we add an artificial candidate $c_0 \prec c_1$ so that $r(v, c_0)= U$ for any $v \in V$, where $U$ is the maximum value of misrepresentation.
Similarly, we add another artificial candidate $c_{m+1} \succ c_m$ so that $r(v, c_{m+1})= U$ for any $v \in V$.
After the additions, the profile is still single-peaked and the solution to the SP-CC instance does not change.

We first reduce CC-SP to the well-studied Minimum Weight $t$-Link Path problem which is defined as follows.

\begin{definition}[Minimum Weight $t$-Link Path]
Given an edge-weighted directed acyclic graph (DAG), a source node and a target node, compute a min-weight path from the source to the target that uses exactly $t$ edges.
\end{definition}

We create a graph on vertex set $\{0, \ldots, m+1\}$.
For every $i, j$ where $0 \le i < j \le m+1$, we add an edge from $i$ to $j$ with weight $w(i, j) = r(\{c_i, c_j\})) - r(\{c_i\})$.
We let the source vertex be vertex $0$ and let the target vertex be vertex $m+1$.
We set $t = k + 1$.
To show the correctness of this reduction, we use the following claim which is a key consequence of SP.

\begin{claim}
\label{cl:CC-SP}
For any $v \in V$, $0 \le i < j \le m + 1$, and for any $C' \subseteq \{c_0, \ldots, c_{i-1}\}$, it holds $r(v, \{c_i, c_j\}) - r(v, \{c_i\}) = r(v, C' \cup \{c_i, c_j\}) - r(v, C' \cup \{c_i\})$.
\end{claim}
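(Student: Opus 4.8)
The plan is to prove the identity one voter at a time, since the statement involves $r(v,\cdot)$ for a fixed $v$ and no summation over voters. First I would translate everything into scalars: write $a = r(v,c_i)$, $b = r(v,c_j)$, and $p = r(v,C')$ (with the convention $p = +\infty$ when $C'=\emptyset$, which in particular covers the degenerate case $i=0$, where $\{c_0,\dots,c_{i-1}\}$ is empty). Unfolding the definition $r(v,S)=\min_{c\in S} r(v,c)$, the four quantities become $r(v,\{c_i,c_j\}) = \min(a,b)$, $r(v,\{c_i\}) = a$, $r(v,C'\cup\{c_i,c_j\}) = \min(p,a,b)$, and $r(v,C'\cup\{c_i\}) = \min(p,a)$. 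So the claim reduces to the purely arithmetic identity
\[
  \min(a,b) - a \;=\; \min(p,a,b) - \min(p,a).
\]

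The heart of the argument is a single structural consequence of single-peakedness: \emph{it is impossible to have both $p < a$ and $b < a$.} To see this, suppose $p < a$, and let $c^* \in C'$ be a candidate attaining the minimum, so $r(v,c^*) = p$. Since $c^* \in \{c_0,\dots,c_{i-1}\}$ we have $c^* \prec c_i$, and since $i<j$ we have $c_i \prec c_j$; hence $c^* \prec c_i \prec c_j$ is an admissible triple for the SP definition. The SP implication applied to this triple reads $r(v,c^*) < r(v,c_i) \implies r(v,c_i) \le r(v,c_j)$, i.e.\ $p < a \implies a \le b$. Thus $p<a$ forces $b \ge a$, ruling out $b<a$. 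The main obstacle is precisely getting this step right: one must choose the correct triple (the minimizer $c^*$ together with $c_i$ and $c_j$), check that it respects the ordering $c^* \prec c_i \prec c_j$, and invoke the SP implication in the correct direction; the hypothesis $C' \subseteq \{c_0,\dots,c_{i-1}\}$ is exactly what guarantees $c^* \prec c_i$.

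With this observation in hand the identity follows by a short case split. If $b \ge a$, then $\min(a,b) = a$ gives $\text{LHS} = 0$, while $\min(p,a,b) = \min(p,a)$ gives $\text{RHS} = 0$. If instead $b < a$, then by the structural observation we must have $p \ge a$, so $\min(p,a) = a$ and $\min(p,a,b) = \min(a,b) = b$; both sides equal $b - a$. Either way the identity holds, which establishes the claim for every voter $v$ and every admissible $i,j,C'$.
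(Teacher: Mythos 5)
Your proof is correct and takes essentially the same route as the paper's: the entire argument in both cases rests on the single application of single-peakedness to the triple $c^* \prec c_i \prec c_j$, where $c^*$ is a candidate in $C'$ strictly better than $c_i$, which is exactly how the paper's second case derives $r(v,c_i) \le r(v,c_j)$. Your scalar reformulation $\min(a,b)-a=\min(p,a,b)-\min(p,a)$ and your case split on $b \ge a$ versus $b < a$ (rather than the paper's split on whether $c_i$ minimizes over $C'\cup\{c_i\}$, i.e.\ $a \le p$ versus $p < a$) are cosmetic rearrangements of the same two-case analysis, with your $p=+\infty$ convention cleanly absorbing the $C'=\emptyset$ case that the paper handles implicitly in its first case.
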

\begin{proof}
Suppose $c_i$ has the minimum misrepresentation of $v$ among $C' \cup \{c_i\}$, then clearly $r(v, C' \cup \{c_i\}) = r(v, \{c_i\})$ and $r(v, C' \cup \{c_i, c_j\}) = r(v, \{c_i, c_j\})$ so the equality follows.

Otherwise, there exists $c' \in C'$ such that $r(v, c') < r(v, c_i)$.
Since the profile is single-peaked and $c' \prec c_i \prec c_j$, we must have $r(v, c_i) \le r(v, c_j)$.
Therefore, $r(v, \{c_i\}) = r(v, \{c_i, c_j\})$ and $r(v, C' \cup \{c_i\}) = r(v, C' \cup \{c_i, c_j\})$, so the equality follows.
\end{proof}

The following lemma shows the correctness of our reduction from CC-SP to Minimum Weight $t$-Linked Path.

\begin{lemma}
\label{lem:reduction_correct}
Given the min-weight path from vertex $0$ to vertex $m+1$ that uses $k+1$ edges, we can compute a winning committee $W \subseteq C$ of size $k$ in linear time.
\end{lemma}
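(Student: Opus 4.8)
The plan is to establish a bijection between paths of the required form and winning committees, then argue that the bijection preserves the objective value so that minimizing path weight is equivalent to minimizing total misrepresentation. First I would take the given min-weight path from vertex $0$ to vertex $m+1$ using exactly $k+1$ edges, and write its intermediate vertices as $0 = i_0 < i_1 < \cdots < i_k < i_{k+1} = m+1$. The natural candidate committee is $W = \{c_{i_1}, \ldots, c_{i_k}\}$, which has size exactly $k$ since the path uses $k+1$ edges and thus visits $k$ intermediate vertices (all distinct because the graph is a DAG on an increasing vertex set). This recovery is clearly linear time: just read off the intermediate vertices of the path.

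The heart of the argument is to show that the total weight of this path equals $r(W) - r(\{c_0\}) = r(W) - nU$, a fixed additive constant independent of the choice of committee. To see this, I would telescope. For a fixed voter $v$, I claim the sum of per-edge contributions $\sum_{\ell=0}^{k} \bigl( r(v, \{c_{i_\ell}, c_{i_{\ell+1}}\}) - r(v, \{c_{i_\ell}\}) \bigr)$ telescopes to $r(v, W \cup \{c_0\}) - r(v, \{c_0\})$. The key tool is Claim~\ref{cl:CC-SP}: since the committee members to the left of $c_{i_{\ell+1}}$ are exactly $\{c_{i_0}, \ldots, c_{i_\ell}\} \subseteq \{c_0, \ldots, c_{i_{\ell+1}-1}\}$, the claim lets me replace the two-element misrepresentation difference by the full-prefix difference $r(v, \{c_{i_0}, \ldots, c_{i_{\ell+1}}\}) - r(v, \{c_{i_0}, \ldots, c_{i_\ell}\})$. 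Summing over $\ell$ then telescopes cleanly to $r(v, \{c_{i_0}, \ldots, c_{i_{k+1}}\}) - r(v, \{c_{i_0}\})$; dropping the artificial endpoints $c_0, c_{m+1}$ (which have misrepresentation $U$ and hence never beat a real candidate's contribution inside the committee) this equals $r(v, W) - U$ once we account for $c_0$. Summing over all $v \in V$ gives that the path weight is $r(W) - nU$.

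Because $nU$ is a constant shared by every length-$(k+1)$ path, minimizing path weight is exactly equivalent to minimizing $r(W)$ over size-$k$ committees, so the min-weight path yields an optimal committee. To complete the correctness I would also verify the reverse direction: every size-$k$ committee $W = \{c_{j_1}, \ldots, c_{j_k}\}$ with $1 \le j_1 < \cdots < j_k \le m$ corresponds to a valid $(k+1)$-edge path $0 \to j_1 \to \cdots \to j_k \to m+1$ of weight $r(W) - nU$, so no committee is unreachable and the optimum over paths matches the optimum over committees. The main obstacle I expect is getting the telescoping argument airtight: specifically, justifying at each step that the prefix $\{c_{i_0}, \ldots, c_{i_\ell}\}$ lies entirely in $\{c_0, \ldots, c_{i_{\ell+1}-1}\}$ so that Claim~\ref{cl:CC-SP} applies with $C'$ equal to that prefix, and carefully handling the artificial candidates $c_0, c_{m+1}$ so the additive constant $nU$ cancels and does not distort the optimization.
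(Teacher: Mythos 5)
Your proposal is correct and takes essentially the same route as the paper's proof: you telescope the per-edge weights via Claim~\ref{cl:CC-SP} with the earlier path vertices as $C'$, use the artificial candidates $c_0, c_{m+1}$ (misrepresentation $U$) to show every $(k{+}1)$-edge path has weight $r(W) - nU$, and check the converse direction so the optima coincide. One cosmetic slip: the claim's hypothesis requires $C' \subseteq \{c_0, \ldots, c_{i_\ell - 1}\}$, so the correct instantiation is $C' = \{c_{i_0}, \ldots, c_{i_{\ell-1}}\}$ (excluding $c_{i_\ell}$ itself) rather than the prefix through $c_{i_\ell}$ --- but that instantiation yields exactly the identity you wrote, so nothing breaks.
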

\begin{proof}
Let $P$ be any path from vertex $0$ to vertex $m+1$ that uses $k+1$ edges, and let the vertices on $P$ be $a_0, a_1, \ldots, a_k, a_{k+1}$ for some $0 = a_0 < a_1 < \ldots< a_k < a_{k+1} =  m+1$.
The weight of path $P$ is thus
\def\longestwidth{\widthof{$\stackrel{\text{Claim~\ref{cl:CC-SP}}}{=}$}} % Computing width of the longest box among stackrels.
\begin{align*}
    \sum_{i=0}^k w(a_i, a_{i+1})
    &\stackrel{\text{def.}}{\mathmakebox[\longestwidth]{=}} \sum_{v \in V} \sum_{i=0}^k r(v, \{c_{a_i}, c_{a_{i+1}}\}) - r(v, \{c_{a_i}\})\\
    &\stackrel{\text{Claim~\ref{cl:CC-SP}}}{=} \sum_{v \in V} \sum_{i=0}^k r(v, \{c_{a_j}\}_{0 \le j \le i+1}) - r(v, \{c_{a_j}\}_{0 \le j \le i})\\
    &\stackrel{}{\mathmakebox[\longestwidth]{=}} \sum_{v \in V} r(v, \{c_{a_j}\}_{0 \le j \le m+1}) - r(v, \{c_{a_j}\}_{0 \le j \le 0})\\
    &\stackrel{}{\mathmakebox[\longestwidth]{=}} -Un + \sum_{v \in V} r(v, \{c_{a_1}, \ldots, c_{a_k}\}),
\end{align*}
which is $-Un$ plus the total misrepresentation of the committee $\{c_{a_1}, \ldots, c_{a_k}\} \subseteq C$.

Conversely, for any subset of $C$ of size $k$ with total misrepresentation $R$, we can find a path from vertex $0$ to vertex $m+1$ using $k+1$ edges with weight $-Un + R$ in the graph by following the equations reversely.
\end{proof}

By Lemma~\ref{lem:reduction_correct}, it suffices to compute the minimum weight $t$-link path of the auxiliary graph.
To do so efficiently, we use the fact that the edge weights have a certain structure.

\begin{lemma}
\label{lem:Concave_Monge_proof}
The edge weights of the Minimum Weight $t$-Link Path instance satisfy the concave Monge property, i.e., for all $1 \le i + 1 < j \le m$, it holds that $w(i, j) + w(i + 1, j + 1) \le w(i, j + 1) + w(i + 1, j)$.
\end{lemma}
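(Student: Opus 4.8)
The plan is to reduce the concave Monge inequality to a per-voter scalar inequality and then read off the sign information I need from single-peakedness. First I would expand $w$ using its definition $w(i,j)=r(\{c_i,c_j\})-r(\{c_i\})$ on all four terms. Because the same singleton quantities $r(\{c_i\})$ and $r(\{c_{i+1}\})$ occur on both sides of the claimed inequality, they cancel, and the statement $w(i,j)+w(i+1,j+1)\le w(i,j+1)+w(i+1,j)$ becomes equivalent to
\[
 r(\{c_i,c_j\})+r(\{c_{i+1},c_{j+1}\})\le r(\{c_i,c_{j+1}\})+r(\{c_{i+1},c_j\}).
\]
Since $r(\{c,c'\})=\sum_{v\in V}\min(r(v,c),r(v,c'))$, it suffices to prove this inequality for the contribution of a single voter $v$ and then sum over $V$.

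Fix $v$, and since $i<i+1<j<j+1$ write $\alpha=r(v,c_i)$, $\beta=r(v,c_{i+1})$, $\gamma=r(v,c_j)$, $\delta=r(v,c_{j+1})$; the goal is $\min(\alpha,\gamma)+\min(\beta,\delta)\le\min(\alpha,\delta)+\min(\beta,\gamma)$. The key structural input is that single-peakedness forces the misrepresentation values of $v$ to be valley-shaped along the SP-axis: applying the defining implication to any triple $c_p\prec c_q\prec c_r$ (the artificial $c_0,c_{m+1}$ keep the profile SP) rules out $r(v,c_q)$ strictly exceeding both $r(v,c_p)$ and $r(v,c_r)$, so the misrepresentations are non-increasing up to $v$'s ideal position and non-decreasing thereafter. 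Restricting to the four positions $i<i+1<j<j+1$ preserves this shape, so the quadruple $(\alpha,\beta,\gamma,\delta)$ is itself valley-shaped, leaving only three sign patterns: fully non-decreasing, fully non-increasing, or down-then-up (namely $\alpha\ge\beta$ and $\gamma\le\delta$).

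To finish I would rewrite the target as $G(\alpha)\ge G(\beta)$ with $G(x)=\min(x,\delta)-\min(x,\gamma)$, which is non-decreasing when $\delta\ge\gamma$, non-increasing when $\delta\le\gamma$, and constant for $x$ outside the interval between $\gamma$ and $\delta$. In the down-then-up pattern $\delta\ge\gamma$ and $\alpha\ge\beta$, so monotonicity of $G$ gives $G(\alpha)\ge G(\beta)$. In the fully non-decreasing pattern $\alpha\le\beta\le\gamma\le\delta$, both $\alpha$ and $\beta$ lie in the flat region $x\le\gamma$ where $G\equiv0$, so $G(\alpha)=G(\beta)$. In the fully non-increasing pattern $\alpha\ge\beta\ge\gamma\ge\delta$, both lie in the flat region $x\ge\gamma$ where $G$ is constant, so again $G(\alpha)=G(\beta)$. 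Thus $G(\alpha)\ge G(\beta)$ in every case, and summing over $v$ yields the lemma.

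The hard part will be the middle step: deducing the valley-shape cleanly from the SP implication while ties (weak orders) are allowed, and making sure the resulting case split is exhaustive. Once that structure is in hand the inequality is only an elementary monotonicity-and-flatness check. I note that single-peakedness is genuinely needed here, since a quadruple with an interior peak, e.g.\ $(\alpha,\beta,\gamma,\delta)=(3,7,0,10)$, gives $\min(\alpha,\gamma)+\min(\beta,\delta)=7>3=\min(\alpha,\delta)+\min(\beta,\gamma)$ and violates the per-voter inequality.
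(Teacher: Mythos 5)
Your proposal is correct and takes essentially the same route as the paper: after cancelling the singleton terms, the paper also reduces to the per-voter inequality and splits into the same three single-peakedness-induced sign patterns (its cases $r(v,c_i)\ge r(v,c_{i+1})\ge r(v,c_j)$; $r(v,c_i)\ge r(v,c_{i+1})$ with $r(v,c_j)\le r(v,c_{j+1})$; and $r(v,c_{i+1})\le r(v,c_j)\le r(v,c_{j+1})$ correspond to your valley-position case split). Your monotone function $G(x)=\min(x,\delta)-\min(x,\gamma)$ is just the transpose of the paper's $f(x)=\min(r(v,c_i),x)-\min(r(v,c_{i+1}),x)$, and the valley-shape fact you flag as the delicate step does hold under the paper's tie-allowing definition of single-peakedness, exactly as you sketch.
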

\begin{proof}
By expanding the definition of $w(\cdot, \cdot)$ and focusing on the terms involving some $v \in V$, it suffices to show
\begin{align*}
        & r(v, \{c_i, c_j\}) - r(v, \{c_i\})
         + r(v, \{c_{i+1}, c_{j+1}\}) - r(v, \{c_{i+1}\})\\
    \le\hspace{4pt} & r(v, \{c_i, c_{j+1}\}) - r(v, \{c_i\})
         + r(v, \{c_{i+1}, c_{j}\}) - r(v, \{c_{i+1}\}),
\end{align*}
which simplifies to
\begin{align}
& r(v, \{c_i, c_j\}) - r(v, \{c_{i+1}, c_{j}\}) \nonumber\\
\le\hspace{4pt} & r(v, \{c_i, c_{j+1}\}) - r(v, \{c_{i+1}, c_{j+1}\}).\label{eq:Concave_Monge_proof}
\end{align}
Since the profile is single-peaked, one of the following must happen.
\begin{enumerate}
    \item $r(v, c_i) \ge r(v, c_{i+1}) \ge r(v, c_j)$.
    In this case, Equation~(\ref{eq:Concave_Monge_proof}) further simplifies to
    $$r(v, c_j) - r(v, c_j) \le r(v, \{c_i, c_{j+1}\}) - r(v, \{c_{i+1}, c_{j+1}\}).$$
    The right hand side is non-negative because $r(v, c_i) \ge r(v, c_{i+1})$, so the inequality holds.

    \item $r(v, c_i) \ge r(v, c_{i+1})$ and $r(v, c_j) \le r(v, c_{j+1})$.
    In this case, we consider a function $f(x) = \min(r(v, c_i), x) - \min(r(v, c_{i+1}), x)$.
    Since $r(v, c_i) \ge r(v, c_{i+1})$, it is not hard to verify that $f(x)$ is non-decreasing.
    Since the left hand side of Inequality~(\ref{eq:Concave_Monge_proof}) is $f(r(v, c_j))$ and the right hand side is $f(r(v, c_{j+1}))$, the inequality holds because $r(v, c_j) \le r(v, c_{j+1})$ and $f(x)$ is non-decreasing.

    \item $r(v, c_{i+1}) \le r(v, c_j) \le r(v, c_{j+1})$.
    In this case, Equation~(\ref{eq:Concave_Monge_proof}) further simplifies to $$r(v, \{c_i, c_j\}) - r(v, c_{i+1}) \le r(v, \{c_i, c_{j+1}\}) - r(v, c_{i+1}),$$
    which is true because $r(v, c_j) \le r(v, c_{j+1})$ implies $r(v, \{c_i, c_j\}) \le r(v, \{c_i, c_{j+1}\})$.
\end{enumerate}
\end{proof}

By Lemma~\ref{lem:Concave_Monge_proof}, we can use the $m^{1+o(1)}$ time algorithm for the Minimum Weight $t$-Link Path instance (we use the algorithm by Schieber~\cite{Schieber98} for the case where $k = \Omega(\log m)$ and use the algorithm by Aggarwal and Park~\cite{AggarwalP88} for the case where $k = \Oh(\log(m))$).
Note that by a brute-force algorithm, we can compute each edge weight in $\Oh(n)$ time, so the overall running time becomes $nm^{1+o(1)}$.
The subpolynomial factor in this running time is $2^{\Oh(\sqrt{\log(k) \log \log (m)})}$ when $k = \Omega(\log(m))$, which can be quite large.
The following lemma gives a way to compute the edge weights faster, and consequently improves the subpolynomial factor.

\begin{lemma}
\label{lem:data_structure}
After an $\Oh(m \sort(n, nm))$ time pre-process\-ing where $\sort(n, nm)$ is the time for sorting $n$ non-negative integers bounded by $\Oh(nm)$, we can create a data structure that supports edge weight queries in $\Oh(\log(m))$ time per query.
\end{lemma}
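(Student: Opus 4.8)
The plan is to reduce a single weight query $w(i,j) = r(\{c_i,c_j\}) - r(\{c_i\})$ to computing only the quantity $r(\{c_i,c_j\}) = \sum_{v\in V}\min(r(v,c_i),r(v,c_j))$, since the values $r(\{c_i\}) = \sum_{v} r(v,c_i)$ can be tabulated for all $i$ in $\Oh(nm)$ time and subtracted in $\Oh(1)$ per query. First I would preprocess the single-peaked structure: for each voter $v$ identify its peak $p_v$ (the position minimizing $r(v,\cdot)$ along the SP-axis) and record that $r(v,\cdot)$ is non-increasing on $\{0,\dots,p_v\}$ and non-decreasing on $\{p_v,\dots,m+1\}$; the two artificial candidates with misrepresentation $U$ guarantee these branches are well defined at both ends. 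This costs $\Oh(nm)$.

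Fixing a query $i<j$, I would classify each voter by where its peak lies relative to the interval. If $p_v\le i$ then $c_i$ sits on the non-decreasing branch, so $r(v,c_i)\le r(v,c_j)$ and the voter contributes $r(v,c_i)$; if $p_v\ge j$ then symmetrically the contribution is $r(v,c_j)$; only when $i<p_v<j$ is the minimum genuinely $\min(r(v,c_i),r(v,c_j))$. The first group yields, for a fixed $i$, a single number that I precompute once per candidate $i$, while the second group yields $\sum_{v:p_v\ge j} r(v,c_j)$, which depends only on $j$ and can be tabulated for all $j$ in $\Oh(nm)$ time. Thus both outer groups are answered in $\Oh(1)$, and the difficulty concentrates entirely in the middle group.

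For the middle group I would exploit the monotone branches to define, for each voter $v$ with $p_v>i$, a threshold $t_v(i)=\max\{\ell : r(v,c_\ell)\le r(v,c_i)\}$: because the sublevel set $\{\ell\ge i: r(v,c_\ell)\le r(v,c_i)\}$ is a contiguous interval $[i,t_v(i)]$, the voter contributes $r(v,c_j)$ when $j\le t_v(i)$ and $r(v,c_i)$ when $j>t_v(i)$. Bucketing the voters by $t_v(i)$ (which is a candidate index, hence one of only $m+2$ values) and storing, per candidate $i$, prefix sums of $r(v,c_i)$ over these buckets handles the ``$j>t_v(i)$'' part via a binary search in $\Oh(\log m)$ time. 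Building these sorted, prefix-summed arrays for every $i$ sorts $n$ integer keys bounded by $m\le nm$ a total of $m$ times, matching the claimed $\Oh(m\,\sort(n,nm))$ preprocessing.

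The main obstacle, and where I would spend most effort, is the complementary sum $\sum_{v:\, p_v>i,\ t_v(i)\ge j} r(v,c_j)$ over the voters whose threshold still exceeds the query column $j$: here the summand is itself the column-$j$ value, so both the index set and the summands vary with $j$, which superficially looks like a two-dimensional aggregation rather than a prefix sum. I would attack it by reorganizing this sum through the monotonicity of the branches --- for instance telescoping $r(v,c_j)$ against the increments $r(v,c_{\ell+1})-r(v,c_\ell)$ so that each voter's contribution becomes additive over a contiguous range of columns --- and then realizing the resulting aggregate as one more binary search into a per-candidate sorted-and-prefix-summed array, whose keys are partial misrepresentation sums bounded by $\Oh(nm)$. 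Making this final reorganization fit simultaneously into $\Oh(m\,\sort(n,nm))$ preprocessing and $\Oh(\log m)$ query, while correctly handling ties (weak orders) at the boundary of the sublevel interval, is the crux of the lemma.
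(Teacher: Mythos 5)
Your reduction to computing $r(\{c_i,c_j\})$, the peak-based classification of voters, and the per-$i$ threshold structure for the half of the sum weighted by $r(v,c_i)$ all match the paper's proof (there, that half appears as the quantity $g(i,j)$, with the tie-break relative to the peak $l_v$ playing the role of your boundary handling). But your proposal is not a complete proof: you explicitly leave the other half, $\sum_{v:\,t_v(i)\ge j} r(v,c_j)$, as an unresolved ``crux,'' and the telescoping you gesture at does not obviously work --- after rewriting $r(v,c_j)$ as a sum of increments along columns, the aggregate still couples a query-dependent index set (determined by $i$ through $t_v(i)$) with column-indexed summands, so it remains the same two-dimensional-looking problem, and there is no evident way to fit the result into $\Oh(m\sort(n,nm))$ preprocessing with $\Oh(\log m)$ queries.

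The missing idea is symmetry: index this half by $j$ rather than by $i$. For fixed $j$ and $v$, single-peakedness (with a consistent tie-break via the peak $l_v$) implies that the set of $i<j$ for which the minimum is attained at $c_j$ is a prefix $\{0,1,\ldots,p_{v,j}\}$; since the weight $r(v,c_j)$ is constant once $j$ is fixed, one sorted array per $j$ of keys $p_{v,j}$ with suffix sums of the weights $r(v,c_j)$ answers this part of the query by a single binary search on $i$ in $\Oh(\log m)$ time --- exactly dual to the structure you already built for the $r(v,c_i)$-weighted half. (The paper additionally proves that $p_{v,j}$ is unimodal as a function of $j$ for each fixed $v$, so all thresholds can be computed in $\Oh(nm)$ total time by a sliding pointer, keeping the preprocessing at $m$ sorts of $n$ integers bounded by $\Oh(nm)$.) With this one observation your argument closes; without it, the lemma is not proved.
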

\begin{proof}
Recall that the edge weight of an edge from $i$ to $j$ is $r(\{c_i, c_j\})-r(\{c_i\})$.
We can easily compute $r(\{c_i\})$ for every $i$ in $\Oh(nm)$ time and store the result, so it remains to consider $r(\{c_i, c_j\})$, which was defined as $$\sum_{v \in V} \min(r(v, c_i), r(v, c_j)).$$

Fix $v$, we let $c_{l_v}$ be any of the candidates where $r(v, c_{l_v})$ is minimized.
Then we define $f(i, j)$ as
$$\sum_{\substack{v \in V \\ r(v, c_i) > r(v, c_j) \text{ OR } \\
(r(v, c_i) = r(v, c_j) \text{ AND } i \le l_v)}} \hspace{-25pt} r(v, c_j).$$
We define $g(i, j)$ as
$$\sum_{\substack{v \in V \\ r(v, c_i) < r(v, c_j) \text{ OR } \\
(r(v, c_i) = r(v, c_j) \text{ AND } i > l_v)}} \hspace{-25pt} r(v, c_i)$$
to cover the remaining cases.
Clearly, $w(i, j) = f(i, j) + g(i, j)$, so we can focus on $f(i, j)$ from now on as we can handle $g(i, j)$ similarly.

We need to have the following claim.

\begin{claim}
For any fixed $j$ and $v$, the set of $i < j$ that satisfies $r(v, c_i) > r(v, c_j) \text{\ OR\ }
(r(v, c_i) = r(v, c_j) \text{\ AND\ } i \le l_v)$ always has the form $\{0, 1, \ldots, p_{v, j}\}$ for some $p_{v, j}$.
Furthermore, we can compute the values of $p_{v, j}$ for all $v$ and $j$ in $\Oh(nm)$ time in total.
\end{claim}
\begin{proof}
Let $S$ be the set of $i < j$ such that $r(v, c_i) > r(v, c_j) \text{ OR }
(r(v, c_i) = r(v, c_j) \text{ AND } i \le l_v)$.
We aim to show that if $x \in S$, then any integer $y$, where $0 \le y < x$, is also in $S$.

Then there are two cases:
\begin{itemize}
    \item In the first case, $r(v, c_x) > r(v, c_j)$.
    In this case, we must have $r(v, c_y) \ge r(v, c_x)$ because the profile is single-peaked.
    Therefore, $r(v, c_y) > r(v, c_j)$, and thus $p \in S$.
    \item In the second case, $r(v, c_x) = r(v, c_j)$.
    In this case, clearly $y \le l_v$, so it suffices to show $r(v, c_y) \ge r(v, c_j) = r(v, c_{x})$.
    For the sake of contradiction, suppose $r(v, c_y) < r(v, c_x)$.
    Since $y < x \le l_v$, $r(v, c_y) < r(v, c_x)$ implies $r(v, c_x) \le r(v, c_{l_v})$.
    Also, since $r(v, c_{l_v})$ is minimum, we must have $r(v, c_x) = r(v, c_{l_v})$, i.e., $r(v, c_x)$ is also a minimum, and it contradicts to $r(v, c_y) < r(v, c_x)$.
\end{itemize}

For each $v$, if we iterate $j$ from $0$ to $m+1$, the value of $p_{v, j}$ will first (weakly) increase and then (weakly) decrease, due to single-peakedness.
More formally, we aim to show that for any $j_1 < j_2 < j_3$, $p_{v, j_1} > p_{v, j_2} \Rightarrow p_{v, j_2} \ge p_{v, j_3}$.
Assume for the sake of contradiction that $p_{v, j_1} > p_{v, j_2}$ and $p_{v, j_2} < p_{v, j_3}$ are both true.
In other words, there exists $i$ such that $i \le p_{v, j_1}$ and $i \le p_{v, j_3}$ while $i > p_{v, j_2}$.
Since $i > p_{v, j_2}$, one of the following must happen:
\begin{itemize}
    \item $i \ge j_2$.
    This is not possible since $i \le p_{v, j_1}$ implies $i < j_1 < j_2$.
    \item $i < j_2$ and $r(v, c_i) < r(v, c_{j_2})$.
    Since $i < j_2 < j_3$, the single-peakedness of the profile implies that $r(v, c_{j_2}) \le r(v, c_{j_3})$.
    Therefore, $r(v, c_i) < r(v, c_{j_3})$, which contradicts to $i \le p_{v, j_3}$.
    \item $i < j_2$, $r(v, c_i) = r(v, c_{j_2})$ and $i > c_{l_v}$.
    In this case, if $r(v, c_i)$ equals $r(v, c_{j_1})$, $i$ must be less than or equal to $c_{l_v}$, thus leading to a contradiction.
    Therefore, $r(v, c_i) > r(v, c_{j_1})$.
    Thus, $r(v, c_{j_2}) > r(v, c_{j_1})$.
    By single-peakedness, $r(v, c_{j_3}) \ge r(v, c_{j_2})$.
    Thus, $r(v, c_{j_3}) \ge r(v, c_i)$.
    Since $i \le p_{v, j_3}$, it is not possible that $r(v, c_{j_3}) > r(v, c_i)$.
    Therefore, $r(v, c_{j_3}) = r(v, c_i)$ and $i \le c_{l_v}$, leading to a contradiction.
\end{itemize}

Therefore, by increasing or decreasing the value of $p_{v, j}$ accordingly after we increment $j$, we can efficiently compute $p_{v, j}$ for all $j$.
The total amount of changes to the values of $p_{v, j}$ over all values of $j$ for a fixed $v$ is $\Oh(m)$.
Thus, this takes $\Oh(nm)$ time overall to compute all $p_{v, j}$.
\end{proof}

Now we can use $p_{v, j}$ to reformulate $f(i, j)$: for any $i < j$, $f(i, j) = \sum_{v \in V, i \le p_{v, j}} r(v, c_j)$.
Conceptually, for every $j$, we consider $n$ numbers $p_{v, j}$, where $p_{v, j}$ has a weight $r(v, c_j)$.
Then for each $i$, we just need to report the total weights of numbers that are at least $i$.
To achieve this task, we first sort $n$ numbers $n \cdot p_{v, j} + v$ for $v \in V$.
We will be able to retrieve $v$ from each $n \cdot p_{v, j} + v$ in the sorted array, and thus retrieve $r(v, c_j)$.
Then we compute the suffix sums of $r(v, c_j)$ in the sorted order.
For every query $i$, we can easily find the smallest $p_{v, j}$ that is greater than or equal to $i$ in $ \Oh(\log(m))$ time
(for instance, we can first remove duplicated $p_{v, j}$ during pre-processing to guarantee that there are at most $m$ of them, then use binary search to find such a $p_{v, j}$), then use the suffix sum associated with it as the answer.

Therefore, the time complexity for the pre-processing is upper bounded by the cost of sorting $n$ non-negative integers bounded by $\Oh(nm)$, multiplied by a factor of $m$, and the query time is $\Oh(\log(m))$.
\end{proof}

In the real-RAM model of computation we consider in this paper, we can use any standard sorting algorithm to sort $n$ integers in $\Oh(n \log (n))$ time, so the pre-processing time of Lemma~\ref{lem:data_structure} becomes $\Oh(nm \log(n))$.
In the word-RAM model of computation, faster algorithms for sorting integers are known.
For instance, it is known that $\sort(n, nm) = \Oh(n \sqrt{\log \log (n)})$~\cite{HanT02}.
Also, in a common case where $m = \poly(n)$, $\sort(n, nm) = \Oh(n)$ by radix sort.
One may use such faster integer sorting algorithms to obtain faster running time for the overall algorithm in the word-RAM model of computation, when all misrepresentations are given as integers (e.g.\ when we use Borda misrepresentation function).

Combining Lemma~\ref{lem:data_structure} with previous ideas we obtain the following theorem.

\begin{theorem}\label{thm:cc-sp}
  We can solve \chamberlincourant on single-peaked preference profiles in time $\Oh(nm\log (n) + m^{1+o(1)})$ assuming that an SP-axis is given or at least one vote is a linear order.
\end{theorem}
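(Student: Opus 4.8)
The plan is to assemble the machinery developed in this section into a single pipeline and then discharge the running-time bookkeeping with a case split on the magnitude of $k$. First I would dispose of the SP-axis assumption: if an SP-axis is supplied we use it directly, and otherwise (when at least one vote is a linear order) we invoke the $\Oh(nm)$ algorithm of~\cite{FitzsimmonsL20} to recover one, which is dominated by the $\Oh(nm\log n)$ term and so is free for our purposes. With the axis in hand I relabel the candidates as $c_1 \prec \cdots \prec c_m$ and prepend and append the two artificial candidates $c_0$ and $c_{m+1}$ as described, which preserves single-peakedness and does not change the optimum.

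Next I build the auxiliary DAG on $\{0,\dots,m+1\}$ with edge weights $w(i,j)=r(\{c_i,c_j\})-r(\{c_i\})$, source $0$, target $m+1$, and $t=k+1$. By Lemma~\ref{lem:reduction_correct} a minimum-weight $(k+1)$-link path corresponds, up to the fixed additive offset $-Un$ and a linear-time decoding step, to a minimum-misrepresentation committee of size $k$; comparing the recovered total misrepresentation against the bound $R$ then settles the decision version. Crucially, Lemma~\ref{lem:Concave_Monge_proof} guarantees that $w$ has the concave Monge property, which is precisely the structural hypothesis required by the fast $t$-link path solvers.

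To avoid paying $\Oh(n)$ per edge weight I instantiate the oracle of Lemma~\ref{lem:data_structure}, spending $\Oh(m \cdot \sort(n,nm)) = \Oh(nm\log n)$ time in the real-RAM model for preprocessing and answering each edge-weight query in $\Oh(\log m)$ time. The remaining choice is which path algorithm to run, and this is where the main care is needed. For $k=\Omega(\log m)$ I run Schieber's algorithm~\cite{Schieber98}, which performs $m\cdot 2^{\Oh(\sqrt{\log k\,\log\log m})}=m^{1+o(1)}$ oracle accesses; multiplying by the $\Oh(\log m)$ query cost keeps the total at $m^{1+o(1)}$, since $\log m = m^{o(1)}$. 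For $k=\Oh(\log m)$ I instead run the $\Oh(mk)$ algorithm of Aggarwal and Park~\cite{AggarwalP88}, which makes $\Oh(m\log m)$ accesses and thus does $\Oh(m\log^2 m)=m^{1+o(1)}$ total work. Adding the preprocessing cost yields $\Oh(nm\log n + m^{1+o(1)})$ in both regimes.

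I expect the only genuine obstacle to be the running-time accounting in the $k=\Omega(\log m)$ case: one must verify that the (possibly super-polylogarithmic) subpolynomial factor $2^{\Oh(\sqrt{\log k\,\log\log m})}$ coming from Schieber's algorithm, even after multiplication by the $\Oh(\log m)$ oracle query time, still collapses into a single $m^{1+o(1)}$ factor rather than degrading the bound. Everything else is a direct invocation of the already-established Lemmas~\ref{lem:reduction_correct}, \ref{lem:Concave_Monge_proof}, and~\ref{lem:data_structure}, so the proof is mostly a matter of stitching them together and tracking the two regimes of $k$.
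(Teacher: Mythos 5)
Your proposal is correct and follows essentially the same route as the paper's proof: reduce to Minimum Weight $t$-Link Path via Lemma~\ref{lem:reduction_correct}, invoke the concave Monge property (Lemma~\ref{lem:Concave_Monge_proof}) and the edge-weight oracle (Lemma~\ref{lem:data_structure}), and split on $k$ between Schieber's algorithm and Aggarwal--Park, exactly as the paper does. Your accounting of the subpolynomial factor is also sound, since $k \le m$ gives $2^{\Oh(\sqrt{\log k\,\log\log m})}\cdot \log m = m^{o(1)}$.
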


\section{Nearly Single-Peaked Preferences and Chamberlin-Courant Rule}\label{sec:nsp-cc}

Betzler et al.~\cite{BetzlerSU13} showed that Approval-CC is $\classW[2]$-hard w.r.t.~$k$.
Their reduction is from the \hittingset problem (HS), in which we are given a family $\calF = \{F_1, \dots F_n\}$ of sets over a universe $U = \{u_1, \dots, u_m\}$ and a positive integer $k$, and the task is to decide whether there exists a size-$k$ subset $U' \subseteq U$ such that $U' \cap F_i \neq \emptyset$ for every $F_i \in \calF$.

Actually, the reduction presented by Betzler et al.~\cite{BetzlerSU13} also works  for the Monroe rule.
This makes the reduction more complex than it is needed for Approval-CC.
A simple reduction from HS to Approval-CC has a strict correspondence between a universe and candidates, and between a family of sets and voters.
A voter approves a candidate if the corresponding set to the voter contains the corresponding element to the candidate.
The misrepresentation bound is equal to $0$ as this will require to cover (hit) all the voters (sets).
Because of this correspondence, Approval-CC with $R=0$ is equivalent to HS.

It was observed recently~\cite{GuptaJST21} that, assuming {\it Set Cover Conjecture} (SCC) there is no $\Ohstar((2-\eps)^n)$-time algorithm for Approval-CC or Borda-CC, for every $\eps > 0$.
This observation follows directly from the reductions presented by Betzler et al.~\cite{BetzlerSU13} and SCC-hardness of HS~\cite[Conjecture 14.36]{CyganFKLMPPS15}.
On the other hand, for general misrepresentation function, Betzler et al.~\cite{BetzlerSU13} showed an $\Ohstar(n^n)$-time algorithm.
An interesting open question is to close this gap for general misrepresentation functions.
Note that for the case of Borda-CC, Gupta et al.~\cite{GuptaJST21} showed an $\Ohstar(2^n)$-time algorithm.

Below, we observe the SETH-hardness of CC, due to the SETH-hardness of HS~\cite{CyganDLMNOPSW16} and the reduction of Betzler et al.~\cite{BetzlerSU13}.

\begin{observation}\label{obs:seth-cc}
  Assuming the Strong Exponential Time Hypothesis, for every $\eps > 0$, there is no $\Ohstar((2-\eps)^m)$-time algorithm for CC, even for approval ballots.
\end{observation}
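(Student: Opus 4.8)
The plan is to import the SETH-hardness of \hittingset and push it through the elementary reduction already described in the surrounding discussion, taking care that the exponent of the lower bound lands on the number of candidates $m$ rather than on the number of voters $n$. Concretely, I would start from the statement we borrow from~\cite{CyganDLMNOPSW16}: under SETH, for every $\eps>0$ there is no $(2-\eps)^{|U|}\poly(|U|,|\calF|)$-time algorithm for \hittingset, where the exponent $|U|$ is the size of the universe, i.e.\ the ground set from which the hitting set is selected. The whole point of the argument is that, in the reduction below, this universe is identified with the candidate set.

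Next I would write the reduction out as an explicit approval instance and check that it is parameter-preserving. Given an instance of \hittingset with universe $U=\{u_1,\dots,u_m\}$, family $\calF=\{F_1,\dots,F_n\}$, and target size $k$, I create one candidate $c_j$ for each element $u_j$ and one voter $v_i$ for each set $F_i$, and use the approval misrepresentation $r(v_i,c_j)=0$ if $u_j\in F_i$ and $r(v_i,c_j)=1$ otherwise; I then set the committee size to $k$ and the misrepresentation bound to $R=0$. The verification is immediate: for a committee $W\subseteq C$ we have $r(W)=\sum_{i}\min_{c\in W}r(v_i,c)=0$ exactly when every voter approves some member of $W$, which by construction happens exactly when the corresponding subset of $U$ intersects every $F_i$. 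Hence the \chamberlincourant instance is a YES-instance iff the original \hittingset instance has a hitting set of size $k$. Crucially, the constructed instance uses approval ballots, has exactly $m=|U|$ candidates and $n=|\calF|$ voters, and is produced in $\poly(n,m)$ time.

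With the reduction in hand the conclusion is a one-liner: were there an $\Ohstar((2-\eps)^m)$-time algorithm for \chamberlincourant, even restricted to approval ballots, running it on the instance above would decide \hittingset in $(2-\eps)^m\poly(n,m)=(2-\eps)^{|U|}\poly(|U|,|\calF|)$ time, contradicting the imported lower bound and therefore SETH.

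I expect the only genuinely delicate point to be the bookkeeping of which parameter the hardness is measured against. The same construction, fed the universe-size lower bound, yields the candidate bound $(2-\eps)^m$, whereas feeding it a family-size lower bound would only reproduce the weaker, already-known voter bound $\Ohstar((2-\eps)^n)$ obtained in~\cite{GuptaJST21} from the Set Cover Conjecture. So the single thing I must be careful to state is that \cite{CyganDLMNOPSW16} is invoked in its universe-size parameterization, ensuring that the candidate count $m$ of the \chamberlincourant instance coincides with the exponent of the \hittingset lower bound. Everything else---the $R=0$ encoding, the equivalence between zero total misrepresentation and hitting all sets, and the polynomial size of the construction---is routine.
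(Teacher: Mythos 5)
Your proposal is correct and follows essentially the same route as the paper: the paper also derives the observation by combining the SETH-hardness of \hittingset in its universe-size parameterization~\cite{CyganDLMNOPSW16} with the simple reduction (universe $\to$ candidates, sets $\to$ voters, approval ballots, bound $R=0$) described just before the observation. Your explicit bookkeeping that the exponent lands on $m=|U|$ rather than $n=|\calF|$ is exactly the point the paper relies on, so there is nothing to add.
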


It means that the $\Ohstar(2^m)$-time brute-force algorithm, which checks all size-$k$ subsets of candidates, is essentially optimal for CC in terms of parameter $m$.
On the other hand, for CC-SP we showed an almost optimal algorithm (Theorem~\ref{thm:cc-sp}).
Hence, it is natural to study computational complexity of CC on preferences that are close to SP in terms of the candidate-deletion operation.

We assume that a candidate-deletion set is given as this makes the problem formulation as general as possible.
Otherwise, finding the minimum size candidate-deletion set would be a bottleneck for our algorithm as, in general, the problem is $\classNP$-hard.
The hardness follows from the observation that the problem
with approval ballots is equivalent to the problem of deleting a minimum number of columns to transform a given binary matrix into a matrix with the {\it consecutive ones property} (Min-COS-C).
For an overview on the computational complexity of Min-COS-C see, e.g.,~\cite{DomGN10}.
An interesting hardness result for Min-COS-C described therein is that an $\alpha$-approximation algorithm for Min-COS-C derives an $\alpha/2$-approximate solution to the {\it Vertex Cover} problem.
As a consequence, it is $\classNP$-hard to approximate the smallest candidate-deletion set within a factor of $2.72$~\cite{KhotMS17} and it is Unique Games Conjecture-hard to approximate within a factor of $4-\epsilon$, for every $\epsilon > 0$~\cite{KhotR08}.
The hardness holds already for approval preferences, where each voter approves at most $2$ candidates.
On the other hand, we can compute the minimum size candidate-deletion set in polynomial-time if all votes are linear
orders~\cite{Przedmojski2016AlgorithmsAE,ErdelyiLP17}.
This holds, e.g., when the misrepresentation function is derived from {\it a committee scoring rule}~\cite{ElkindFSS17} where a classic example is Borda scoring rule used in Borda-CC.

In the following theorem we show that CC is FPT w.r.t.~the size of a given candidate-deletion set\footnote{This result was presented before by Misra et al.~\cite{MisraSV17}, however our construction is well-suited for generalization to Thiele rules presented in Section~\ref{sec:thiele-nsp}, so we provide a proof for completeness.} (denoted by $D$) and the obtained algorithm is essentially optimal assuming SETH.
The main idea of the algorithm is to:
1) guess pre-elected winners among $D$;
2) delete $D$ from the instance together with appropriate modification of the instance with respect to the pre-elected winners;
3) as the modified instance appears to be a CC-SP instance, we use our algorithm from Theorem~\ref{thm:cc-sp}.

\begin{theorem}
\label{thm:cc-nsp}
  We can solve \chamberlincourant with a given candidate-deletion set of size $d$ in time $\Ohstar(2^d)$.
  Furthermore, assuming the Strong Exponential Time Hypothesis, for any $\eps > 0$ there is no $\Ohstar((2-\eps)^d)$-time algorithm that solves the problem, even for approval ballots.
\end{theorem}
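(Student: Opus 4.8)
The plan is to establish the two halves separately: the $\Ohstar(2^d)$ algorithm via the guess-and-cap reduction to CC-SP, and then the matching lower bound directly from Observation~\ref{obs:seth-cc}.

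For the upper bound I would follow the three-step outline. Let the given candidate-deletion set be $D$ with $|D| = d$, so that $(V, C \setminus D, r_{-D})$ is single-peaked. I iterate over all $2^d$ subsets $D' \subseteq D$, interpreting $D'$ as the intended intersection $W \cap D$ of an optimal committee $W$. Fixing a feasible guess $D'$ (discarding those with $|D'| > k$ or with fewer than $k-|D'|$ remaining candidates), the other $k - |D'|$ winners must come from $C \setminus D$, and for each voter $v$ the presence of $D'$ caps its achievable misrepresentation at $\tau_v := r(v, D')$. I therefore define a capped misrepresentation $r'(v,c) := \min(r(v,c), \tau_v)$ on $V \times (C \setminus D)$ and invoke Theorem~\ref{thm:cc-sp} to find a size-$(k-|D'|)$ committee $W' \subseteq C \setminus D$ minimizing its total misrepresentation under $r'$. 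Since $\min_{c \in D' \cup W'} r(v,c) = \min(\tau_v, \min_{c\in W'} r(v,c)) = \min_{c \in W'} r'(v,c)$, the total misrepresentation of $D' \cup W'$ under $r$ equals that of $W'$ under $r'$, so returning the best committee over all guesses solves CC.

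The crux is that the capped instance is again single-peaked on the same axis, so that Theorem~\ref{thm:cc-sp} applies. To see this, fix a voter $v$ and candidates $c_i \prec c_j \prec c_k$ in $C \setminus D$ with $r'(v,c_i) < r'(v,c_j)$. Monotonicity of the map $t \mapsto \min(t,\tau_v)$ forces $r(v,c_i) < r(v,c_j)$; single-peakedness of $r_{-D}$ then gives $r(v,c_j) \le r(v,c_k)$, and applying the same monotone map yields $r'(v,c_j) \le r'(v,c_k)$, which is exactly the required implication. Correctness of the enumeration follows because the true optimum corresponds to the guess $D' = W \cap D$, while every guess produces a genuine feasible committee and hence can only over-estimate the optimum. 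For the running time, each guess costs $\Oh(nd)$ to compute the thresholds $\tau_v$ plus one CC-SP solve; the SP-axis of $C \setminus D$ is precomputed once in $\Oh(nm^2)$ time~\cite{FitzsimmonsL20}, which is absorbed by $\Ohstar$. The total is $2^d \cdot \poly(n,m) = \Ohstar(2^d)$.

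For the lower bound I would reduce directly from Observation~\ref{obs:seth-cc}. Given an arbitrary Approval-CC instance on candidate set $C$ with $|C| = m$, I simply declare $D := C$ to be the supplied candidate-deletion set: this is valid since the empty profile on $C \setminus D = \emptyset$ is trivially single-peaked, and it has size $d = m$. A hypothetical $\Ohstar((2-\eps)^d)$-time algorithm for CC with a given candidate-deletion set would then decide the original Approval-CC instance in $\Ohstar((2-\eps)^m)$ time, contradicting Observation~\ref{obs:seth-cc} under SETH; hence the base $2$ is optimal. I expect the only genuinely delicate step to be verifying single-peakedness of the capped instance; the enumeration bookkeeping and the trivial $D = C$ reduction are routine once that is in hand.
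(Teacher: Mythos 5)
Your proposal is correct and takes essentially the same route as the paper's proof: the same enumeration of guesses $W_D \subseteq D$ as pre-elected winners, the same capped misrepresentation $r'(v,c) = \min\bigl(r(v,c), r(v,W_D)\bigr) = r(v, W_D \cup \{c\})$ together with the same monotonicity argument that the original SP-axis remains valid for $r'$, and the same trivial reduction yielding the SETH lower bound via Observation~\ref{obs:seth-cc}. The only (immaterial) differences are cosmetic: you set $D = C$ (vacuously single-peaked after deletion) where the paper takes a size-$(m-2)$ subset so that a two-candidate, hence obviously SP, profile remains, and the paper handles the $|W_D| = k$ guess as an explicit special case rather than through the capped instance.
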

\begin{proof}
Let $I = (V,C,r,k)$ be an instance of the optimization version of CC and $D$ be a candidate-deletion set of size $d$.
For the sake of analysis let us fix an optimal solution $W^*$ to the instance $I$.

{\bf Algorithm.}
First, we would like to find $W_D^* = W^* \cap D$ which we call {\it pre-elected winners}.
It is enough to consider all subsets $W_D \subseteq D$ of size at most $k$.
We have at most $\sum_{i=0}^{\min(k,d)} \binom{d}{i} \leq 2^d$ many such subsets and $W_D^*$ is one of them.
For each subset $W_D$, we will remove $D$ from the instance and find remaining $k' = k-|W_D|$ winners that minimizes total misrepresentation of a committee consisted of the members of $W_D$ and the remaining $k'$ candidates.
We will store all committees constructed for each considered $W_D$ and we will return the committee with the smallest total misrepresentation.
Notice that in the case where $W_D = W_D^*$ we will construct $W^*$ (or another optimal solution to the instance $I$) as one of the potential solutions.
Below, we describe how to fill optimally a pre-elected committee $W_D$.

If $|W_D| = k$ then we simply save $W_D$ as one of potential solutions of our algorithm.
Otherwise, if $|W_D| < k$, we need to choose additional $k' = k-|W_D|$ winners to the committee among a subset $C' = C \setminus D$.
For this purpose, we define a new instance $I' = (V,C',r',k')$ of CC, where $r': V \times C' \rightarrow \reals_{\geq 0}$ is defined as follows:
\[
  r'(v,c) = \min(r(v,W_D), r(v,c)) = r(v,W_D \cup \{c\}).
\]
The misrepresentation function $r'$ is constructed in such a way that misrepresentation of $v$ from a candidate $c \in C'$ is not greater than misrepresentation of $v$ by any pre-elected candidate, i.e., a candidate from $W_D$.
Moreover, we have $r'(v,W') = r(v,W_D \cup W')$.
It means that an optimal solution to $I'$ has the same misrepresentation as an optimal extension of the pre-elected committee $W_D$, as desired.
Note that, when $W_D = W_D^*$, an optimal solution to $I'$ has the same total misrepresentation as an optimal solution $W^*$.
Below we show formally that the obtained preference profile $(V,C',r')$ is single-peaked.

\begin{lemma}
  The preference profile $(V,C',r')$ is single-peaked.
\end{lemma}
\begin{proof}
  By the definition of $D$, we know that $(V,C \setminus D,r_{-D})$ is SP.
  It means that there exists a linear order $\prec$ over $C \setminus D = C'$ such that $r_{-D}$ has the single-peakedness conditions.
  We will show that the linear order $\prec$ is also an SP-axis for $r'$.
  For this purpose, let us consider a triple of distinct candidates $a, b, c \in C'$ such that $a \prec b \prec c$ and let $v \in V$ be any voter.
  If $r'(v,a) < r'(v,b)$ then by the definition of $r'$ we have $r(v, W_D \cup \{a\}) < r(v, W_D \cup \{b\})$.
  Due to the strictness of the inequality and the definition of $r(v,\cdot)$ we obtain $r(v,a) < r(v,b)$.
  Single-peakedness of $(V,C',r)$ on the SP-axis $\prec$ implies $r(v,b) \leq r(v,c)$.
  From the definition of $r(v,\cdot)$ we obtain $r(v, W_D \cup \{b\}) \leq r(v, W_D \cup \{c\})$
  and, by the definition of $r'$, this is equivalent to $r'(v,b) \leq r'(v,c)$ as desired in the condition for single-peakedness for $(V,C',r')$.
  The condition when $c \prec b \prec a$ can be proven analogously.
\end{proof}

Therefore, the new instance is an CC-SP instance which can be solved in polynomial-time, e.g., using our algorithm from Theorem~\ref{thm:cc-sp}.
In such a way, we find the remaining $k'$ winners optimally.

{\bf Running time.}
The running time comes from a consideration of all subsets of $D$ of size at most $k$ (there are at most $2^d$ such subsets) and, for each such subset, run an algorithm that solves an CC-SP instance in polynomial-time.

{\bf Lower bound.}
Let $I = (V,C,r,k)$ be an instance of Approval-CC.
Let $D$ be any size-$(m-2)$ subset of $C$.
Note that $D$ is a candidate-deletion set, as a profile with two candidates is always SP.
Then, an $\Ohstar((2-\eps)^d)$-time algorithm for Approval-CC on an instance $I$ with a given candidate-deletion set $D$, for some $\epsilon > 0$, would solve $I$ in time at most $\Ohstar((2-\eps)^d) \leq \Ohstar((2-\eps)^m)$.
This is a contradiction with SETH due to Observation~\ref{obs:seth-cc}.
\end{proof}

A simple corollary from Theorem~\ref{thm:cc-nsp} is that CC is poly\-nomial-time solvable if $d$ is logarithmic in the input size.
\begin{corollary}
  \chamberlincourant with a given candidate-deletion set of size $d$, where $d \leq \Oh(\log(nm))$, is polynomial-time solvable.
\end{corollary}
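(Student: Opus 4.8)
The plan is to derive this corollary directly from the running-time bound in Theorem~\ref{thm:cc-nsp}, which gives an $\Ohstar(2^d)$-time algorithm for \chamberlincourant with a given candidate-deletion set of size $d$. Recall that the $\Ohstar$ notation suppresses factors that are polynomial in the input size, i.e.\ factors of the form $(nm)^{\Oh(1)}$. So it suffices to show that when $d \le \Oh(\log(nm))$, the dominant factor $2^d$ is itself polynomial in the input size, after which the whole running time collapses to $\poly(n,m)$.

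First I would make the logarithmic hypothesis explicit: by assumption there is a constant $\gamma > 0$ such that $d \le \gamma \log(nm)$ for all sufficiently large instances. Substituting this into the base-$2$ exponential gives
\[
  2^d \le 2^{\gamma \log(nm)} = (nm)^{\gamma},
\]
where I would be a little careful about the logarithm base (if $\log$ denotes $\log_2$ this is immediate with exponent $\gamma$; for a natural logarithm the exponent becomes $\gamma/\ln 2$, still a constant). Either way, $2^d = (nm)^{\Oh(1)}$ is polynomial in the input size.

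The final step is to feed this back into Theorem~\ref{thm:cc-nsp}. The algorithm runs in $\Ohstar(2^d)$ time, and since $\Ohstar$ already absorbs a $(nm)^{\Oh(1)}$ factor while $2^d$ has just been shown to be such a factor, the product $2^d \cdot (nm)^{\Oh(1)}$ remains $(nm)^{\Oh(1)} = \poly(n,m)$. Hence the overall running time is polynomial in the input size, which is exactly the claim. There is essentially no hard part here; the only thing to be slightly careful about is the constant hidden in the $\Oh(\log(nm))$ bound and the base of the logarithm, both of which only affect the constant exponent of the resulting polynomial and therefore do not threaten polynomial-time solvability.
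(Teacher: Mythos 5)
Your proposal is correct and matches the paper's reasoning exactly: the corollary is stated there as an immediate consequence of Theorem~\ref{thm:cc-nsp}, obtained by noting that $d \leq \Oh(\log(nm))$ makes $2^d = (nm)^{\Oh(1)}$, so the $\Ohstar(2^d)$ bound collapses to polynomial time. Your extra care about the hidden constant and the logarithm base is fine but does not change anything essential.
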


One would ask whether we can have polynomial-time algorithm for larger values of $d$.
Unfortunately, in the following theorem we show that if $d$ is slightly larger than logarithm in the input size, say $d = \Oh(\log n \log\log m)$, then it is not possible under ETH.

\begin{theorem}
\label{thm:cc-nsp-hardness}
  Under the Exponential Time Hypothesis, there is no polynomial-time algorithm for \chamberlincourant with a given candidate-deletion set of size at most $f(n, m)$ for any function $f(n, m) = \omega(\log(n))$ or $f(n, m) = \omega(\log(m))$.
\end{theorem}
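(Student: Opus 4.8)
The plan is to prove the contrapositive via a reduction from an ETH-hard problem, engineered so that the candidate-deletion set is small. The natural starting point is \tsat with $N$ variables and $\Oh(N)$ clauses, for which ETH rules out $2^{o(N)}$-time algorithms. The key difficulty is reconciling two conflicting demands: I need an instance whose candidate-deletion set $d$ is only $\omega(\log n)$ (or $\omega(\log m)$), yet whose size $n$ and $m$ are polynomial in the \tsat instance, while still encoding $2^{\Theta(N)}$ worth of combinatorial choice. If I build a standard reduction where $d$ candidates are the ``non-SP part'' carrying all the choice, then trying all $2^d$ subsets is exactly the brute force, and a polynomial-time CC algorithm would solve \tsat in $\poly(2^d, n, m)$ time; to get a contradiction with ETH I must force $d = \Theta(N)$ (so that $2^d = 2^{\Theta(N)}$) while simultaneously keeping $d$ below the stated threshold $f(n,m)$ relative to $n$ and $m$.

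The resolution is to make $n$ and $m$ large enough that $d = \Theta(N)$ falls into the window $\omega(\log n) \le d$ (equivalently $d = \omega(\log m)$). Concretely, since $f(n,m) = \omega(\log n)$ means $f$ eventually exceeds any constant multiple of $\log n$, I would pad the election so that $\log n = o(d) = o(N)$, i.e.\ choose $n = 2^{o(N)}$. This is where I have genuine freedom: I can afford $n$ and $m$ up to $2^{o(N)}$ and the reduction still produces a sub-$2^{o(N)}$-sized instance, so a hypothetical polynomial-time (or even $\poly(n,m)$-time) CC algorithm would run in $\poly(2^{o(N)}) = 2^{o(N)}$ time overall, contradicting ETH. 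The first concrete step is therefore to fix a reduction from \tsat to CC in which the ``hard'' choices are localized to a candidate-deletion set of size $\Theta(N)$; the cleanest route is to reuse the \hittingset-based reduction of Betzler et al.\ (which underlies Observation~\ref{obs:seth-cc}), since \hittingset is ETH-hard and in that reduction candidates correspond to universe elements, of which there are $\Theta(N)$, making essentially all of $C$ a candidate-deletion set of the required size.

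The second step is the padding argument that places $d$ in the allowed range for whichever branch of the hypothesis is assumed. For the $f = \omega(\log n)$ branch I would add $\Theta(2^{\sqrt{N}})$ (or more generally $2^{o(N)}$) dummy voters that do not affect the optimum, so that $\log n = \Theta(\sqrt{N}) = o(N) = o(d)$ and hence $d \le f(n,m)$ eventually; symmetrically, for the $f = \omega(\log m)$ branch I would add $2^{o(N)}$ dummy candidates placed safely on the SP-axis (so that they neither change the optimal misrepresentation nor enlarge the candidate-deletion set), giving $\log m = o(N) = o(d)$. In both cases the padding keeps $n, m = 2^{o(N)}$, so the total instance size is subexponential in $N$.

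The main obstacle I anticipate is the padding itself: I must verify that the dummy voters and candidates genuinely leave the CC optimum unchanged and, crucially for the candidate branch, that the added candidates can be inserted into a single consistent SP-axis together with the original universe-candidates \emph{without} increasing $d$. Making a dummy candidate harmless typically means giving every voter misrepresentation $U$ (the global maximum) for it, as the artificial boundary candidates $c_0, c_{m+1}$ are handled in Section~\ref{sec:cc-sp}; I would need to confirm such uniform-misrepresentation columns are compatible with single-peakedness on the extended axis so they do not themselves force deletions. Once the invariance of the optimum and the bound $d = \Theta(N)$ with $n, m = 2^{o(N)}$ are established, the contradiction with ETH is immediate: a polynomial-time CC algorithm for deletion sets of size $f(n,m)$ would, on these instances, decide \tsat in $2^{o(N)}$ time.
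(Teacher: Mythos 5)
Your overall architecture is the same as the paper's: start from an ETH-hard Approval-CC instance with $\Theta(N)$ candidates and $\Theta(N)$ voters (via the \hittingset reduction), then pad with harmless dummy voters for the $f(n,m)=\omega(\log(n))$ branch and dummy candidates for the $f(n,m)=\omega(\log(m))$ branch, so that $d=\Theta(N)$ while the instance size is $2^{o(N)}$ and a $\poly(n,m)$-time algorithm would refute ETH. However, your padding step as written fails: you pad to a \emph{fixed} size $n=\Theta(2^{\sqrt{N}})$ and claim ``hence $d \le f(n,m)$ eventually'', but $f=\omega(\log(n))$ only guarantees $f(n,m)/\log(n)\to\infty$, so for a slowly growing $f$, say $f(n,m)=\log(n)\log\log(n)$, your padded instance has $f(n,m)=\Theta(\sqrt{N}\log(N))\ll \Theta(N)=d$, and the constructed instances do not satisfy the theorem's hypothesis at all. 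The amount of padding must depend on $f$: the paper sets $h(n,m)=f(n,m)/\log(n)=\omega(1)$ and adds dummy voters one by one until $n \ge 2^{N/h(n,m)}$, which gives $d = m = \Oh(N) \le \Oh(h(n,m)\log(n)) = \Oh(f(n,m))$ while still keeping $n = 2^{N/\omega(1)} = 2^{o(N)}$. Your parenthetical ``or more generally $2^{o(N)}$'' gestures at this, but a single padding schedule chosen independently of $f$ cannot work, and this adaptive choice is the one nontrivial quantifier in the whole proof.

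The second gap is the one you yourself flag but do not resolve: for the $f(n,m)=\omega(\log(m))$ branch you propose inserting dummy candidates ``into a single consistent SP-axis together with the original universe-candidates'', which is the wrong move --- the original instance is not SP, so there is no axis to extend, and trying to make the dummies axis-compatible with the originals is a dead end. The paper's resolution is simpler and inverts the roles: give each dummy candidate misrepresentation $1$ from every voter (so the optimum is unchanged), and take the candidate-deletion set to be the \emph{original} $\Oh(N)$ candidates. After deleting them, the residual profile consists only of dummy candidates with all misrepresentations equal, so the single-peakedness condition $r(v,c_i)<r(v,c_j)\Rightarrow r(v,c_j)\le r(v,c_k)$ holds vacuously under any ordering, and $d \le \Oh(N) = \Oh(h(n,m)\log(m))$ with no compatibility argument needed. (Symmetrically, in the voter branch the deletion set is simply all $m=\Oh(N)$ candidates, and the dummy voters have $r(v,c)=0$ for every $c$, leaving the optimum unchanged.) With these two repairs your plan coincides with the paper's proof.
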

\begin{proof}
By the known ETH hardness of HS~\cite{ImpagliazzoPZ01} and the above-mentioned reduction from HS to Approval-CC, Approval-CC does not have $2^{o(N)}$ time algorithm when there are $\Oh(N)$ candidates and $\Oh(N)$ voters under ETH.
From such hard instances, we construct hard instances for every function $f(n, m) = \omega(\log(n))$ or $f(n, m) = \omega(\log(m))$.

For the first case, given an Approval-CC instance with $\Oh(N)$ candidates and $\Oh(N)$ voters, we aim to create an instance where the size of the candidate-deletion set $d$ is $\Oh(h(n, m) \log(n))$ for $h(n, m) = f(n, m) / \log(n)= \omega(1)$.
We then add some dummy voters, one by one, until $n \ge 2^{N / h(n, m)}$ so that for each dummy voter $v$, $r(v, c) = 0$ for every candidate $c$ ($r$ is still an approval misrepresentation function).
The minimum total misrepresentation will not change with the addition of these voters.
The candidate-deletion set could be the set of all candidates.
In the new instance $n \ge 2^{N / h(n, m)}$ and $d = m = \Oh(N)$.
Thus, $d = \Oh(h(n, m) \log(n))$ as desired.
Since the reduction clearly has running time $2^{o(N)}$, the new instance does not have $\poly(n) = \poly(2^{N / h(n, m)}) = 2^{o(N)}$ time algorithm under ETH.

The other case is analogous.
Given an Approval-CC instance with $\Oh(N)$ candidates and $\Oh(N)$ voters, we aim to create an instance where $d = \Oh(h(n, m) \log(m))$ for $h(n, m) = f(n, m) / \log(m)= \omega(1)$.
We then add some dummy candidates until $m \ge 2^{N / h(n, m)}$ so that for each dummy candidate $c$, $r(v, c) = 1$ for every voter $v$ ($r$ is still an approval misrepresentation function).
The minimum total misrepresentation will not change with the addition of these candidates.
Also, note that if we delete the original $\Oh(N)$ candidates, the profile will be single-peaked.
Thus, $d \le \Oh(N) = \Oh(h(n, m) \log m)$ as desired.
Under ETH, the new instance does not have $\poly(m) = \poly(2^{N / h(n, m)}) = 2^{o(N)}$ time algorithm.
\end{proof}

\newpage
\section{Nearly Single-Peaked Preferences and Thiele Voting Rules}\label{sec:thiele-nsp}

In this section, we first provide an ETH-based lower bound on the running times of algorithms that solve any non-constant $w$-\thiele.

The first proof of \classNP-hardness of all non-constant $w$-\thiele follows from \classNP-hardness proof for
the {\it OWA-Win\-ner} problem for a specific family of {\it OWA vectors}~\cite[Theorem 5]{SkowronFL16}.
To show ETH-hardness straightforwardly, we will instead use a reduction from the \independentset problem (IS).
In IS we are given a graph, a positive integer $k$ and we are asked whether there exists a size-$k$ subset of vertices such that no two of them are adjacent.
The idea to reduce from IS was used in \cite{AzizGGMMW15} for the above-mentioned {\it Proportional Approval Voting}.
Actually, their reduction works for any $w$-\thiele with $w_2 < 1$.
The idea was extended by Jain et al.~\cite{JainST20} to other non-constant $w$-Thiele rules when $w_2=1$ by introducing a proper number of dummy candidates.
They presented their result for a more general problem and, actually, they did not include a formal construction in their paper,
so in the proof of Theorem~\ref{thm:eth-thiele} we include a simplified proof that works specifically for all non-constant $w$-\thiele.

In our reduction, we have $m = |V(G)|+\Oh(1)$ and $n = \frac{3}{2}|V(G)|$, where $V(G)$ is the set of vertices in an IS instance on a $3$-regular graph.
Therefore, since IS on $3$-regular graphs does not have $2^{o(|V(G)|)}$ time algorithms under ETH~\cite[Theorem~5]{Amiri21}, we obtain the following theorem.

\begin{theorem}
\label{thm:eth-thiele}
  Assuming the Exponential Time Hypothesis, there is neither $\Ohstar(2^{o(m)})$ nor $\Ohstar(2^{o(n)})$-time algorithm for any non-constant $w$-\thiele.
\end{theorem}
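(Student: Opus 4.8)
The plan is to reduce from \independentset on $3$-regular graphs, following the blueprint sketched in the paragraph preceding the theorem. Given a $3$-regular graph $G=(V(G),E(G))$ and an integer $k$, I would build a $w$-\thiele instance whose candidate set corresponds (up to $\Oh(1)$ extra dummy candidates) to $V(G)$, and whose voter set corresponds to $E(G)$. First I fix notation: since $w$ is non-increasing with $w_1=1$ and non-constant, there is some smallest index $p$ with $w_p < w_1 = 1$; call $\delta = 1 - w_{p}/w_{p-1}$-type slack the quantity that distinguishes ``one more approval past a threshold'' from ``saturated''. The key design goal is: selecting two adjacent vertices into the committee must be strictly worse (in total utility) than selecting two non-adjacent vertices, so that optimal committees of the right size correspond exactly to independent sets of size $k$.

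The construction I would use: create one voter per edge $e=\{x,y\}$ whose approval ballot is $\{x,y\}$ (the two endpoint-candidates). Because $G$ is $3$-regular, $|E(G)| = \tfrac{3}{2}|V(G)|$, which matches the claimed $n = \tfrac{3}{2}|V(G)|$ and $m = |V(G)|+\Oh(1)$. With these ballots, a committee $W$ gives each edge-voter utility $w_1=1$ if exactly one endpoint is chosen, $w_1+w_2$ if both endpoints are chosen, and $0$ if neither is. If $w_2 < 1 = w_1$, then choosing both endpoints of an edge yields marginal gain $w_2 < w_1$, strictly less than what the second candidate would contribute were it covering a fresh edge; solving the resulting counting identity shows the total utility of a size-$k$ committee is maximized precisely when $W$ is an independent set (each chosen vertex then contributes $w_1$ to each of its $3$ incident edges with no ``wasted'' $w_2$ overlap), giving a clean threshold $U$. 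This is exactly the Aziz et al.~\cite{AzizGGMMW15} argument, valid whenever $w_2 < 1$.

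The main obstacle is the boundary case $w_2 = 1$ (so that $w=(1,1,\dots,1,w_p,\dots)$ with the first drop occurring only at some index $p \ge 3$), where the simple edge-gadget no longer separates independent from non-independent committees, since covering an edge twice is no penalty until the $p$-th approval. Here I would follow the Jain et al.~\cite{JainST20} idea of padding with a controlled number of dummy candidates so that the relevant marginal comparisons land exactly at the first index $p$ where $w_p < w_{p-1}$: intuitively, I attach to each edge-voter enough dummy committee members (forced to be selected, or equivalently pre-approved) so that the effective count of approvals sits at $p-1$, and then the choice between a first and second ``real'' endpoint again compares $w_{p-1}$ against $w_p < w_{p-1}$, restoring strict separation. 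The delicate points are (i) ensuring the dummy candidates are selected in every optimal committee (achieved by making them approved by sufficiently many voters, or by a separate block of voters approving only dummies so that leaving one out loses more utility than any real candidate can recover), and (ii) bookkeeping that the total number of dummies is $\Oh(1)$ rather than growing with $|V(G)|$, which is what keeps $m = |V(G)| + \Oh(1)$ and hence preserves the $2^{o(m)}$ and $2^{o(n)}$ lower bounds.

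Finally, I would set the committee size to $k' = k + (\text{number of forced dummies})$ and the utility bound $U$ to the value a genuine size-$k$ independent set achieves, then verify both directions: an independent set of size $k$ yields a committee meeting $U$, and any committee meeting $U$ must avoid selecting both endpoints of any edge beyond the saturation threshold, forcing its ``real'' part to be an independent set of size $k$. Since the whole reduction is polynomial (indeed linear) in $|V(G)|$ and produces $m=|V(G)|+\Oh(1)$, $n=\tfrac{3}{2}|V(G)|$, any $\Ohstar(2^{o(m)})$ or $\Ohstar(2^{o(n)})$ algorithm for $w$-\thiele would yield a $2^{o(|V(G)|)}$ algorithm for \independentset on $3$-regular graphs, contradicting ETH via~\cite[Theorem~5]{Amiri21}. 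The expected hard part is purely the $w_2=1$ padding analysis; the $w_2<1$ case is essentially immediate from the edge gadget.
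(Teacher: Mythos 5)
Your proposal follows essentially the same route as the paper's proof: a reduction from \independentset on $3$-regular graphs with one voter per edge, padded by a constant number $\ell-2$ of dummy candidates (approved by every edge-voter and forced into any feasible committee by the utility threshold) so that the marginal comparison for a second endpoint lands exactly at the first strict drop $w_{\ell-1}=1$ versus $w_\ell<1$, with the paper treating $w_2<1$ and $w_2=1$ uniformly rather than as separate cases. The only flaw is an off-by-one slip: the padding should leave each voter's approval count at $p-2$, not $p-1$, so that the first real endpoint contributes $w_{p-1}=1$ and the second contributes $w_p<1$ --- which is exactly the comparison you yourself state, so the fix is immediate.
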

\begin{proof}
  First, we provide a formal construction of the reduction from IS to $w$-\thiele for any non-constant Thiele sequence $w$.
  Its idea was described in~\cite{JainST20} for an even more general problem (related to a participatory budgeting problem which allows interactions between items we choose).
  For completeness we include a simplified construction of the reduction tailored for our needs.

  {\bf Reduction.} Let $(G,K)$ be an instance of IS, where $G$ is a $3$-regular graph with a vertex set $V(G)$ and an edge set $E(G)$, and $K$ is a positive integer.
  For a fixed Thiele sequence $w$ we construct an instance of $w$-\thiele as follows.
  For each vertex $v \in V(G)$ we define a vertex-candidate $c_v$.
  As $w$ is a fixed non-constant Thiele sequence, there exists a minimum constant $\ell \in \{2,3,\dots\}$ such that $w_\ell < 1$.
  We add $\ell-2$ dummy candidates $C_D = \{d_1, d_2, \dots d_{\ell-2}\}$ to the instance.
  Hence, the total number of candidates equals $m = |V(G)|+\ell-2 = |V(G)|+\Oh(1)$.
  For each edge $e \in E(G)$ we define a voter $v_e$ which approves exactly two candidates that are not dummy corresponding to the endpoints $x$ and $y$ of $e$ and all dummy candidates, i.e., $ A_{v_e} = \{x,y\} \cup C_D$.
  The number of voters equals $n = E(G) = \frac{3}{2}V(G)$.
  We define the desired size of the committee as $k = K + \ell-2$ and the utility bound as $U = (\ell-2)E(G) + 3K$.
  Clearly, it is a polynomial-time reduction.

  {\bf Correctness.} Let $V'$ be a size-$K$ independent set of $G$.
  We use $C(V')$ to denote the set of candidates corresponding to vertices from $V'$.
  We define a solution $W$ to the constructed instance of $w$-\thiele as candidates from $C(V')$ and all dummy candidates, i.e., $W = C(V') \cup C_D$.
  We will show that this is a feasible solution.
  Clearly $|W| = K + \ell-2 = k$.
  The total utility achieved by the solution $W$ consists of:
  $|E(G)|\sum_{i=1}^{\ell-2} w_i=|E(G)|(\ell-2)$ from dummy candidates and $3K$ from $C_D$ as each vertex from the corresponding size-$K$ independent set covers exactly $3$ edges and none of the edges is covered twice (hence each voter is covered at most $\ell-1$ times).
  Formally, the total utility achieved by $W$ can be computed as follows:
  
  \begin{equation*}
  \renewcommand\arraystretch{2.2}
  \begin{array}{lcl}
    u(W) &=& \sum_{e \in E(G)} u(v_e,W)
         = \sum_{e \in E(G)} \sum_{i=1}^{|A_{v_e} \cap W|} w_i\\
         &=& \sum_{\{x,y\} \in E(G)} \sum_{i=1}^{|(\{c_x,c_y\} \cup C_D) \cap (C(V') \cup C_D)|} w_i\\
         &=& \sum_{\{x,y\} \in E(G)} \sum_{i=1}^{|\{c_x,c_y\} \cap C(V')| + \ell-2} w_i\\
         &=& (\ell-2)|E(G)| + \sum_{\{x,y\} \in E(G)} \sum_{i=1}^{|\{c_x,c_y\} \cap C(V')|} \hspace{-10pt}w_{i+\ell-2}
    \end{array}
  \end{equation*}
%page breaking
  \begin{equation*}
  \renewcommand\arraystretch{2.2}
  \begin{array}{lcl}
         &=& (\ell-2)|E(G)| + \sum_{\{x,y\} \in E(G)} \sum_{i=1}^{|\{x,y\} \cap V'|} w_{i+\ell-2}\\
         &\stackrel{(*)}{=}& (\ell-2)|E(G)| + \sum_{\{x,y\} \in E(G)} |\{x,y\} \cap V'| \cdot w_{\ell-1}\\
         &\stackrel{(**)}{=}& (\ell-2)|E(G)| + 3K \cdot 1 = U,
    \end{array}
  \end{equation*}
  where $(*)$ follows because $V'$ is an independent set, so it is never true that $|\{x,y\} \cap V'|=2$, and $(**)$ follows from the fact that $w_{\ell-1}=1$ and the number of incident edges to $V'$ is $3K$ as $G$ is a $3$-regular graph and $|V'| = K$.
  Hence, $W$ is a solution to the constructed instance.

  In the other direction, let $W$ be a feasible solution to the constructed instance.
  First, we notice that $C_D \subseteq W$.
  Otherwise, the total utility achieved by $W$ would be at most:
  $(\ell-3)|E(G)| + 3(K+1) = (\ell-2)|E(G)| + 3K + (3-|E(G)|) < U$.
  We define a size-$K$ subset of $V(G)$ as vertices that correspond to $W \setminus C_D$ and next, we argue that the constructed subset is an independent set.
  Indeed, if it is not the case, then there exists an edge $e =(x,y) \in E(G)$ such that $c_x,c_y \in W \setminus C_D$, hence the voter $v_e$ is covered exactly $\ell$ times.
  Then, the total utility achieved by $W$ is at most:
  $(\ell-2)|E(G)| + 3K-1 + w_{\ell} = U + (w_{\ell}-1) < U$.
  It is a contradiction with the fact that $W$ is a feasible solution.

  {\bf Lower bound.} To achieve a lower bound, let us assume there is an $\Ohstar(2^{o(m)})$-time algorithm for a fixed non-constant $w$-\thiele.
  Using the reduction presented above we can solve IS on a $3$-regular graph $G$ in time:
  $\Ohstar(2^{o(m)}) = \Ohstar(2^{o(|V(G)|+\Oh(1))}) = \Ohstar(2^{o(|V(G)|)})$,
  which contradicts ETH~\cite[Theorem~5]{Amiri21}.

  Analogously we prove the other lower bound using the fact that $n = \frac{3}{2}|V(G)|$.
\end{proof}
Theorem~\ref{thm:eth-thiele} means that the $\Ohstar(2^m)$-time brute-force algorithm is essentially optimal for non-constant $w$-Thiele rules in terms of parameter $m$.
In contrast to CC, the gap for parameter $n$ is much larger in the case of Thiele rules.
Known algorithm that is FPT w.r.t.~$n$ for Thiele rules has double exponential dependence, namely $\Ohstar(2^{2^{\Oh(n)}})$ \cite{BredereckF0KN20}.
Narrowing the gap for the parameter $n$ is an interesting research direction.

It is known that $w$-\thiele is polynomial-time solvable on SP profiles by, e.g., using an {\it Integer Linear Programming} formulation (ILP) which has {\it totally unimodular matrix} (TUM)~\cite{Peters18}.
Such ILP can be solved by a polynomial-time algorithm for {\it Linear Programming}.

Analogously to CC, we study the computational complexity of $w$-\thiele where a candidate-deletion set $D$ of size $d$ is given.

In Theorem~\ref{thm:thiele-nsp} we show that $w$-\thiele is also FPT w.r.t.~$d$, with an $\Ohstar(2^d)$ running time by checking all subsets of a candidate-deletion set as pre-elected winners.
However, the way we solve an instance with pre-elected winners is completely different from what we do for CC (Theorem~\ref{thm:cc-nsp}).
In order to derive the above result we first define generalized Thiele rules in which each voter has a private Thiele sequence.
Formally, the input of \generalizedthiele is a superset of the input of regular $w$-\thiele and additionally there is also a {\it collection of~$n$ Thiele sequences}, one for each voter.
(In contrast, the Thiele sequence is not an input to $w$-\thiele.
We emphasize that $w$-\thiele is a collection of computational problems parameterized by a Thiele sequence $w$, hence we can provide results for specific Thiele sequences, e.g., in Theorem~\ref{thm:eth-thiele}).
Precisely, a collection of $n$ Thiele sequences $w$ is represented by a function of two arguments: $w: V \times \naturals^+ \rightarrow [0,1]$, where $(w(v,i))_{i \in \naturals^+}$ is a Thiele sequence for each voter $v \in V$.
For brevity we define $w_i^v = w(v,i)$.
Now, utility of a voter $v$ from a committee $C'$ is defined as $u(v,C') = \sum_{i=1}^{|A_v \cap C'|} w_i^v$.
As in $w$-\thiele, in \generalizedthiele our task is to find a size-$k$ subset $W \subseteq C$ such that the total utility $u(W)$ is at least $U$.

It is easy to see that any $w$-\thiele is a special case of \generalizedthiele.
Indeed, a given instance of $w$-\thiele is an instance of \generalizedthiele where $w$ is the Thiele sequence for each voter.

One immediate yet interesting result, is that \generalizedthiele on SP profiles is polynomial-time solvable.
The proof follows from a proper modification of the objective function of ILP presented by Peters~\cite{Peters18}.

\newpage
\begin{theorem}
\label{thm:gen-thiele-sp}
  \generalizedthiele on single-peaked preferences profiles can be solved in polynomial-time.
\end{theorem}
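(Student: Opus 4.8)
The plan is to adapt the totally-unimodular integer program of Peters~\cite{Peters18} for $w$-\thiele on single-peaked profiles, changing only its objective function so that each voter may carry a private Thiele sequence. First I would fix an SP-axis and label the candidates $c_1 \prec \cdots \prec c_m$ accordingly; since the profile is single-peaked with approval ballots, each voter $v$'s approved set $A_v$ is an interval $I_v = \{c_{\ell_v}, \ldots, c_{r_v}\}$ of consecutive candidates. Peters' formulation uses a committee-membership variable $x_j \in \{0,1\}$ for each $c_j$ together with the cardinality constraint $\sum_{j} x_j = k$, and, for each voter $v$ and level $t$, a marginal-utility variable $y_{v,t} \in [0,1]$ tied to the coverage count by $\sum_{t} y_{v,t} = \sum_{c_j \in I_v} x_j$, with objective $\sum_{v}\sum_{t} w_t\, y_{v,t}$. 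The essential point, which I would recall from~\cite{Peters18}, is that the constraint matrix is totally unimodular: its restriction to the $x$-columns has the consecutive-ones property (each voter row is the interval $I_v$, and the cardinality row is the full interval $[1,m]$), which is TUM, and each $y_{v,t}$-column has a single $\pm 1$ entry, so adjoining these columns preserves total unimodularity.

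The modification is then immediate in form but is exactly where the generalization lives: I would replace the shared weights $w_t$ in the objective by the per-voter weights $w_t^v = w(v,t)$, giving the objective $\sum_v \sum_t w_t^v\, y_{v,t}$, and leave every constraint untouched. Because total unimodularity is a property of the constraint matrix alone and the right-hand sides ($k$ and $0$) are integral, every vertex of the LP relaxation is integral regardless of the new objective, so an optimal vertex has $x \in \{0,1\}^m$ and $y \in \{0,1\}$. It then remains to check that the linearized objective computes the true generalized-Thiele utility. At an integral solution the coverage count $n_v = \sum_{c_j \in I_v} x_j$ is an integer, and maximizing $\sum_t w_t^v y_{v,t}$ subject to $\sum_t y_{v,t} = n_v$ and $0 \le y_{v,t} \le 1$ places the unit mass on the smallest indices $t$, since each sequence $(w_t^v)_t$ is non-increasing; the optimal value is thus $\sum_{i=1}^{n_v} w_i^v = u(v, W)$, where $W$ is the committee encoded by $x$. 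Summing over $v$ shows the program's value equals the total utility $u(W)$, so an integral optimum of the LP is a maximum-utility size-$k$ committee.

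Finally I would observe that the program has $\Oh(nm)$ variables and $\Oh(n)$ structural constraints, and that a TUM linear program with integral right-hand side can be solved in polynomial time by any polynomial-time LP algorithm, yielding an integral optimum directly. The main obstacle is not the algebra but the justification that the passage from $w$-\thiele to \generalizedthiele touches only the objective: one has to be sure that per-voter weights require no new constraints and hence cannot disturb total unimodularity, and that the non-increasing property of each Thiele sequence is precisely what keeps the marginal-variable linearization exact. Both reduce to the singleton-column TUM observation and the greedy fill argument above, so once the structure of Peters' matrix is in hand the theorem follows with essentially no further work.
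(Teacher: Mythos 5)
Your proposal is correct and follows essentially the same route as the paper's proof: take Peters' totally unimodular ILP for $w$-\thiele on single-peaked profiles, replace the shared weights $w_\ell$ in the objective by per-voter weights $w_\ell^v$, observe that the constraints (and hence total unimodularity) are untouched, and solve the resulting program as a linear program. The additional details you supply --- the consecutive-ones/singleton-column justification of TUM and the greedy-fill argument showing the linearization is exact for non-increasing sequences --- are consistent with what the paper delegates to the citation of Peters~\cite{Peters18}.
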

\begin{proof}
We use the approach presented by Peters~\cite{Peters18} for $w$-\thiele, i.e.\ the ILP formulation labeled PAV-IP therein, which has totally unimodular matrix.
We modify the objective function as follows.
Instead of
\[
  \max \sum_{v \in V} \sum_{\ell=1}^{k} w_\ell \cdot x_{v,\ell},
\]
we use
\[
  \max \sum_{v \in V} \sum_{\ell=1}^{k} w_\ell^v \cdot x_{v,\ell}.
\]
A binary variable $x_{v,\ell}$ encodes information whether $v$ has at least $\ell$ approved candidates in a solution.
Clearly, such ILP solves \generalizedthiele.
Notice that the constraints are exactly the same as in PAV-IP, therefore the constraint matrix is totally unimodular for SP profiles, so we can solve \generalizedthiele on SP profiles using any polynomial-time algorithm for Linear Programming.
\end{proof}

Now we are ready to show our main result in this section, i.e., an algorithm which is FPT w.r.t.~the size of a given candidate-deletion set that works not only for $w$-\thiele but also for \generalizedthiele.

\begin{theorem}
\label{thm:thiele-nsp}
  We can solve \generalizedthiele with a given candidate-deletion set of size $d$ in time $\Ohstar(2^d)$.
  Furthermore, assuming the Exponential Time Hypothesis, there is no $\Ohstar(2^{o(d)})$-time algorithm that solves the problem, even for any non-constant $w$-\thiele.
\end{theorem}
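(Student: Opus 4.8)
The plan is to mirror the two-part structure of Theorem~\ref{thm:cc-nsp}: an $\Ohstar(2^d)$ algorithm obtained by guessing which members of $D$ are elected, and a matching ETH lower bound obtained by feeding a hard $w$-\thiele instance through a trivial candidate-deletion set. For the algorithm I would fix an optimal committee $W^*$ for analysis and iterate over all subsets $W_D \subseteq D$ with $|W_D| \le k$ (at most $2^d$ of them), treating each as a guess for $W^* \cap D$. When $|W_D| = k$ we simply record $W_D$; otherwise we must reconstruct an optimal extension inside $C' = C \setminus D$, and taking the best committee over all guesses yields the optimum, since the guess $W_D = W^* \cap D$ recovers an optimal extension.

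The essential difference from the CC case is how pre-electing $W_D$ interacts with the objective. For a voter $v$, write $a_v = |A_v \cap W_D|$, which is fixed once $W_D$ is chosen. If we add a further set $W' \subseteq C'$, then $|A_v \cap (W_D \cup W')| = a_v + |A_v \cap W'|$, so the utility of $v$ splits as
\[
  \sum_{i=1}^{a_v} w_i^v \; + \; \sum_{j=1}^{|A_v \cap W'|} w_{a_v + j}^v ,
\]
where the first term is a constant $\mathrm{const}_v$ depending only on $W_D$. Summing over voters, the contribution of $W'$ is exactly the \generalizedthiele utility on the profile $(V, C', r_{-D})$ in which each voter $v$ is assigned the \emph{shifted} sequence $\tilde{w}^v = (w_{a_v+1}^v, w_{a_v+2}^v, \dots)$. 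Each $\tilde{w}^v$ is again non-increasing and $[0,1]$-valued, hence an admissible \generalizedthiele sequence, and $(V, C', r_{-D})$ is single-peaked by definition of the candidate-deletion set. I would therefore solve this shifted \generalizedthiele-SP instance with committee size $k' = k - |W_D|$ in polynomial time via Theorem~\ref{thm:gen-thiele-sp}, and add back $\sum_{v \in V} \mathrm{const}_v$ to recover the true total utility of $W_D \cup W'$. Over all $2^d$ guesses this gives an $\Ohstar(2^d)$ running time.

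For the lower bound I would reduce from $w$-\thiele itself. Viewing any $w$-\thiele instance with $m$ candidates as a \generalizedthiele instance in which all voters share the sequence $w$, pick $D = C \setminus \{c_1, c_2\}$ for an arbitrary pair of candidates, so that $d = m - 2$. After deleting $D$ only two candidates remain, and a two-candidate profile is trivially single-peaked (there are no triples that could violate the condition), so $D$ is a valid candidate-deletion set. Consequently, any $\Ohstar(2^{o(d)})$-time algorithm for the candidate-deletion version would solve $w$-\thiele in time $\Ohstar(2^{o(m-2)}) = \Ohstar(2^{o(m)})$, contradicting Theorem~\ref{thm:eth-thiele} (and hence ETH) for every non-constant $w$.

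The step I expect to be the main obstacle is verifying that the shifted per-voter sequences $\tilde{w}^v$ remain valid inputs to Theorem~\ref{thm:gen-thiele-sp} and that its ILP/TUM argument is insensitive to the objective coefficients; in particular, that dropping the normalization $w_1 = 1$ for the shifted sequences does not break the construction. This should hold because total unimodularity is a property of the constraint matrix alone, while only the monotonicity of the coefficients is used to guarantee that the optimal indicator variables $x_{v,\ell}$ form a prefix pattern, and monotonicity is preserved under shifting. Everything else is bookkeeping: correctness of the guess, the additive constant offset, and the trivial single-peakedness of a two-candidate profile.
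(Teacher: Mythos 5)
Your proposal matches the paper's proof essentially step for step: the same $2^d$ enumeration of pre-elected winners $W_D \subseteq D$, the same per-voter shift $\hat{w}_i^v = w_{i+|A_v \cap W_D|}^v$ solved on the residual SP profile via Theorem~\ref{thm:gen-thiele-sp}, and the identical lower bound taking $D = C \setminus \{c_1, c_2\}$ and invoking Theorem~\ref{thm:eth-thiele}. If anything, you are slightly more careful than the paper, which leaves implicit both the additive constant $\sum_{v} \sum_{i=1}^{|A_v \cap W_D|} w_i^v$ and the observation that shifted sequences need not satisfy $w_1 = 1$ yet remain admissible since total unimodularity depends only on the constraint matrix.
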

\begin{proof}
  Let $D$ be a given candidate-deletion set of size $d$.
  First, we guess pre-elected winners among $D$, call them $W_D$.
  Next we delete $D$ from the instance, but then the proper modification of the instance with respect to the pre-elected winners is different with that in Theorem~\ref{thm:cc-nsp}.
  Instead of modifying utility function we modify Thiele sequences for each voter depending on the number of approved candidates among pre-elected winners $W_D$.

  For each voter $v \in V$ we define a new Thiele sequence such that $\hat{w}_i^{v} = w_{i+|A_v \cap W_D|}^{v}$ for every $i$.
  Note that a given instance of \generalizedthiele with removed candidates from $D$ and modified Thiele sequences as defined above is a \generalizedthiele instance with an SP profile.
  Therefore, we can solve it in polynomial-time using Theorem~\ref{thm:gen-thiele-sp}.

  In at least one case, when trying all subsets of $D$, we consider the maximum subset of an optimal solution that is included in $D$.
  In this case we find the remaining winners making a committee at least as good as an optimal one, so also optimal.
  Because of the guessing phase, the running time is clearly $\Ohstar(2^d)$.

  {\bf Lower bound.} Analogously to the lower bound proof of Theorem~\ref{thm:cc-nsp}, for a given instance of non-constant $w$-\thiele we define $D$ as a set of all candidates except arbitrary $2$ candidates.
  After removing $D$ from the instance, we get a profile with $2$ candidates, hence an SP profile, so $D$ is a candidate-deletion set.
  Then, an $\Ohstar(2^{o(d)})$-time algorithm for \generalizedthiele on our instance with a given candidate-deletion set $D$ would solve $I$ in time at most $\Ohstar(2^{o(d)}) \leq \Ohstar(2^{o(m)}))$.
  This is a contradiction with ETH due to Theorem~\ref{thm:eth-thiele}.
\end{proof}

Analogously to the results for CC, \generalizedthiele is polynomial-time solvable if $d$ is logarithmic in the input size and is not polynomial-time solvable under ETH if $d$ is slightly larger.

\begin{corollary}
  \generalizedthiele with a given candidate-deletion set of size $d$, where $d \leq \Oh(\log(nm))$, is polynomial-time solvable.
\end{corollary}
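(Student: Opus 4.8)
The plan is to apply Theorem~\ref{thm:thiele-nsp} as a black box and simply substitute the hypothesis $d \leq \Oh(\log(nm))$ into its running time. Theorem~\ref{thm:thiele-nsp} provides an algorithm solving \generalizedthiele with a given candidate-deletion set of size $d$ in time $\Ohstar(2^d)$, and since the candidate-deletion set is supplied in the input in both statements, the hypotheses match and nothing new needs to be constructed.

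First I would unfold the $\Ohstar$ notation: as defined in the Preliminaries, it suppresses factors polynomial in the input size, so the running time of the algorithm from Theorem~\ref{thm:thiele-nsp} is $2^d \cdot (nm)^{\Oh(1)}$. Second, I would use the assumption $d \leq \Oh(\log(nm))$, i.e.\ $d \leq c\log(nm)$ for some constant $c>0$, to bound the exponential factor by $2^d \leq 2^{c\log(nm)} = (nm)^{\Oh(1)}$, which is polynomial in the input size regardless of the base of the logarithm (changing the base only rescales the constant in the exponent). Third, multiplying the two polynomial factors yields $2^d \cdot (nm)^{\Oh(1)} = (nm)^{\Oh(1)}$, so the overall running time is polynomial in the input size, as claimed.

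There is essentially no obstacle here; the corollary is a direct quantitative consequence of Theorem~\ref{thm:thiele-nsp}, exactly mirroring the corollary that follows Theorem~\ref{thm:cc-nsp} for CC. The only point worth keeping in mind is that the candidate-deletion set must be supplied as part of the input (as required by the theorem) rather than computed, since finding a minimum candidate-deletion set can itself be \classNP-hard; but this is exactly what the phrase \emph{``with a given candidate-deletion set''} in the statement guarantees.
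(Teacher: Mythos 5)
Your proof is correct and follows exactly the argument the paper intends: the corollary is an immediate consequence of Theorem~\ref{thm:thiele-nsp}, obtained by unfolding the $\Ohstar$ notation and noting that $2^d \leq (nm)^{\Oh(1)}$ when $d \leq \Oh(\log(nm))$. Your remark that the candidate-deletion set must be given in the input (since computing a minimum one can be $\classNP$-hard) matches the paper's own discussion and correctly identifies the only subtlety.
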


\begin{theorem}
\label{thm:thiele-nsp-hardness}
  Under Exponential Time Hypothesis, there is no polynomial-time algorithm for \generalizedthiele with a given candidate-deletion set of size at most $f(n, m)$ for any function $f(n, m) = \omega(\log(n))$ or $f(n, m) = \omega(\log(m))$, even for any non-constant $w$-\thiele.
\end{theorem}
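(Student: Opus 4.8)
The plan is to mirror the padding argument used for CC in Theorem~\ref{thm:cc-nsp-hardness}, but starting from the ETH-hardness we have just established for Thiele rules. Concretely, Theorem~\ref{thm:eth-thiele} gives, for every fixed non-constant $w$, a reduction from \independentset on $3$-regular graphs producing a $w$-\thiele instance with $m = \Oh(N)$ candidates and $n = \Oh(N)$ voters (writing $N = |V(G)|$) that admits no $2^{o(N)}$-time algorithm under ETH. Since every $w$-\thiele instance is a \generalizedthiele instance, it suffices to build hard \generalizedthiele instances of this restricted form, and I will do so separately for the two regimes $f(n,m) = \omega(\log(n))$ and $f(n,m) = \omega(\log(m))$.

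For the first regime, I would set $h(n,m) = f(n,m)/\log(n) = \omega(1)$ and pad the hard instance with dummy voters, one at a time, until $n \ge 2^{N/h(n,m)}$. Each dummy voter is given the empty approval ballot $A_v = \emptyset$, so its utility $u(v,W) = \sum_{i=1}^{0} w_i^v = 0$ is identically zero for every committee $W$; hence neither the optimum nor the bound $U$ changes. I then take the candidate-deletion set $D = C$ to be all candidates, which is valid because the restricted profile on $C \setminus D = \emptyset$ is vacuously single-peaked. Then $d = m = \Oh(N)$, and since $n \ge 2^{N/h(n,m)}$ gives $N \le h(n,m)\log(n)$, we obtain $d = \Oh(h(n,m)\log(n)) = \Oh(f(n,m))$ as required. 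The whole reduction runs in $2^{o(N)}$ time, so a polynomial-time, i.e.\ $\poly(n) = \poly(2^{N/h(n,m)}) = 2^{o(N)}$-time, algorithm would contradict ETH.

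For the second regime I set $h(n,m) = f(n,m)/\log(m) = \omega(1)$ and instead pad with dummy candidates that are approved by no voter, adding them until $m \ge 2^{N/h(n,m)}$. Because each $w_i^v$ is non-negative and the Thiele utility is monotone in the committee, a size-$k$ committee containing a dummy never beats the same committee with the dummy replaced by a real candidate, so the optimal size-$k$ utility, and therefore the answer, is unchanged. Deleting the $\Oh(N)$ original candidates leaves only dummies, all with empty approval sets, so the restricted profile is again trivially single-peaked; hence the original candidates form a candidate-deletion set of size $d \le \Oh(N) = \Oh(h(n,m)\log(m)) = \Oh(f(n,m))$. As before, a $\poly(m) = \poly(2^{N/h(n,m)}) = 2^{o(N)}$-time algorithm would contradict ETH.

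I expect the only real care to lie in the two dummy gadgets: one must check that each simultaneously preserves the optimal value and keeps the deleted-candidate profile single-peaked. This is in fact cleaner than in the CC setting, since monotonicity of the (non-negative, non-increasing) Thiele utility makes dummy candidates approved by nobody provably harmless, and empty ballots trivially respect any SP-axis. The only bookkeeping subtlety, exactly as in Theorem~\ref{thm:cc-nsp-hardness}, is turning the bound $d = \Oh(f(n,m))$ into $d \le f(n,m)$, which is absorbed by letting $h(n,m) = \omega(1)$ grow slightly faster.
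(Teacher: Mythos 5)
Your proposal is correct and follows essentially the same route as the paper: invoke the ETH-hardness of Theorem~\ref{thm:eth-thiele} and then rerun the padding argument of Theorem~\ref{thm:cc-nsp-hardness} in its two regimes, taking all candidates (resp.\ the original candidates) as the deletion set. The only deviation is the dummy-voter gadget: the paper pads with voters who approve \emph{all} candidates and accordingly raises the utility bound by $\sum_{i=1}^{k} w_i$ per dummy voter, whereas your empty-ballot dummies contribute zero utility and leave $U$ untouched --- a slightly cleaner but equivalent choice.
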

\begin{proof}
After we apply Theorem~\ref{thm:eth-thiele}, the remaining proof is essentially the same as the proof of Theorem~\ref{thm:cc-nsp-hardness}.
The only difference is that in the case of adding dummy voters, which approve all the candidates, we have to increase the required utility bound properly.
Indeed, each dummy voter will get exactly $\sum_{i=1}^{k} w_i$ score from any solution of size $k$.
\end{proof}

\section{Conclusions}

We showed an almost optimal algorithm for CC-SP (Theorem~\ref{thm:cc-sp}).
The bottleneck of the algorithm is finding (if not given) an SP-axis in the case where all the votes are given as weak orders.
Finding an SP-axis on weak orders takes $\Oh(nm^2)$ time~\cite{FitzsimmonsL20}---they reduce this problem to the Consecutive Ones problem and apply known algorithms for the latter problem.
Can it be improved to $\Oh(nm)$-time?

We showed that CC on general instances is FPT w.r.t.~the size of a given candidate-deletion set $D$ (Theorem~\ref{thm:cc-nsp}).
Moreover, the achieved $\Ohstar(2^{|D|})$-time algorithm is essentially optimal under SETH.
We adapted this result to Thiele rules (Theorem~\ref{thm:gen-thiele-sp}), but we do not know how much the base of the power in our $\Ohstar(2^{|D|})$-time algorithm for Thiele rules can be improved.
We know that, under ETH, we cannot hope for an $\Ohstar(2^{o(|D|)})$-time algorithm.
Can one provide a stronger lower bound (e.g., under SETH or SCC)?

OWA-based rules are generalization of both CC and Thiele rules.
$w$-OWA-Winner rule generalizes $w$-\thiele by allowing any utility values (as in CC) instead of approval ballots~\cite{SkowronFL16}.
It is tricky to define total utility that a voter gets from a committee as we have a sequence $w$ and different utilities for all the members of a committee.
It is defined by sorting these utilities and taking a dot product with $w$ (the highest utility is multiplied by the highest weight etc.)
A minimization version of the problem was studied under the name OWA $k$-Median~\cite{ByrkaSS18}.
An open question is whether such $w$-OWA-Winner is FPT w.r.t.~the size of a given candidate-deletion set.
Our approach from Theorem~\ref{thm:thiele-nsp} does not seem to work because we need to know how to modify Thiele sequences for each voter when we pre-define winners.

\section*{Acknowledgements}
We would like to thank the anonymous reviewers for their helpful comments.

Virginia {Vassilevska Williams} was supported by NSF Grants CCF-2129139 and CCF-1909429, a BSF Grant BSF:2020356, a Google Research Fellowship and a Sloan Research Fellowship.
Yinzhan Xu was partially supported by NSF Grant CCF-2129139.
Krzysztof Sornat was partially supported by
the SNSF Grant 200021\_200731/1,
the National Science Centre, Poland (NCN; grant number 2018/28/T/ST6/00366) and
the European Research Council (ERC) under the European Union’s Horizon 2020 research and innovation programme (grant agreement No 101002854).

\bibliographystyle{plain}
\bibliography{bib}

\end{document}